\newtheorem{obs}[theorem]{Observation}
\def\F{{\cal F}}
\def\S{{\cal S}}
\def\N{{\bf N}}
\def\mapfam{-mapping family }
\def\im{{\rm Im}}
\title{Uniform Universal Sets, Splitters, and Bisectors} 
\author{Elisabet Burjons}{Serra H\'{u}nter Fellow, Universitat Polit\`{e}cnica de Catalunya}{elisabet.burjons@upc.edu}{https://orcid.org/0000-0001-6161-7440}{}
\author{Peter Rossmanith}{Department of Computer Science, RWTH Aachen,
Germany}{rossmani@cs.rwth-aachen.de}{https://orcid.org/0000-0003-0177-8028}{}
\authorrunning{E.~Burjons and P.~Rossmanith}
\keywords{Hash Functions, Universal Sets, Derandomization, Splitters, Bisectors} 
\begin{document}

\maketitle

\begin{abstract}
Given a subset of size $k$ of a very large universe a randomized way
to find this subset could consist of deleting half of the universe and
then searching the remaining part.  With a probability of $2^{-k}$ one
will succeed.  By probability amplification, a randomized algorithm needs
about $2^k$ rounds until it succeeds.  We construct \emph{bisectors} that
derandomize this process and have size~$2^{k+o(k)}$.  One application is
derandomization of reductions between average case complexity classes.
We also construct \emph{uniform $(n,k)$-universal sets} that generalize
universal sets in such a way that they are bisectors at the same time.
This construction needs only linear time and produces families of
asymptotically optimal size without using advanced combinatorial
constructions as subroutines, which previous families did, but are based
mainly on modulo functions and refined brute force search.
\end{abstract}

\clearpage
\section{Introduction}

There are several good reasons why one might prefer deterministic
algorithms over probabilistic ones.  An example is to avoid
having one-sided errors and another one is to have guaranteed
running times.  Therefore derandomization of probabilistic
algorithms is an important topic.  Unfortunately, one pays
a penalty in the running time for having a deterministic
algorithm~\cite{BanikPRS2018,DGHKKPRS2002,HSSW2002,Schoning2002,DantsinGHKKPRS2002,MoserS2011}.

Kleitman and Spencer introduced universal sets~\cite{KleitmanS1973},
which later turned out to be an important tool in derandomization.  Naor,
Schulman, and Srinivasan~\cite{NaorSS95} improved the construction of
universal sets and introduced splitters to derandomize
algorithms in such a way that leads to determistic algorithms
whose running time is almost the same as their randomized
counterparts'~\cite{BlasiokK2017,CaiCC2006,CKLMRRSZ2009,JansenLR2020,NaorSS95}. 

Bisectors were introduced in~\cite{DreierLR2020} as a way to
derandomize reductions between average case complexity classes. 

An $(n,k,\ell)$-splitter is a family of functions $f\colon[n]\to[\ell]$,
such that for every $k$-subset $S$ of $[n]$, there is a function~$f$ in
the family splitting $S$ evenly or as evenly as possible into parts
$f^{-1}(x)\cap S$ for $x\in[\ell]$ (we write $[n]=\{1,\ldots,n\}$).
If $k\le \ell$ this means that every element of $S$ is mapped to a
different element in $[\ell]$.  In particular,
$(n,k,k)$-splitters are perfect hash functions, which can be used,
for instance, to derandomize color coding~\cite{AlonYZ1995}.

Universal sets are a more general version of $(n,k,2)$-splitters.
An $(n,k)$-universal set is a family of functions $[n]\to\{0,1\}$
such that for every possible
way of dividing a $k$-subset $S$ into two subsets,
there is a function mapping one of
them to~$0$, and the other to~$1$.  An early motivation
is testing of circuit components where each component relies on at
most $k$ inputs. A lower bound of $\Omega(2^k\log n)$ exists for their
size~\cite{KleitmanS1973}. The existence of such sets is relatively easy
to prove by using the union bound and probabilistic arguments.

Finding such universal sets and splitters through brute force takes too
long, and Naor et al.~\cite{NaorSS95} use a combination of exhaustive
search on probability spaces, and advanced combinatorial concepts
like error correcting codes to build almost optimal splitters
and universal sets in time linear in the output size.

A bisector is, in a way, a degenerated universal set, where for every $k$-subset $S$ 
there must be a function in the bisector mapping $S$ completely to $0$. One
could think that such families are not interesting,
because only
one function, which maps every element to $0$, would already fulfill this
condition.  However, we also have a global condition, which we have seen
neither in splitters nor in universal sets so far. An $(n,k)$-bisector
is a family of functions $f\colon[n]\to\{0,1\}$ such that \emph{at least one} function
maps a $k$-subset $S$ to $0$ for every possible $S$, and \emph{every}
function maps exactly half of $[n]$ to $0$ and the other half to $1$.

Universal sets, bisectors and splitters are related notions, with one striking difference.
In a bisector we have a global condition on each function, whereas there is no such restriction
in the case of universal sets or splitters. However, the probabilistic method suggests
that there should exist small universal sets and splitters, even if we require that the functions
have the same global property as bisectors.

Motivated by the global property of uniformity that bisectors have, where every function maps half the elements to 0 and the other half to 1, in this paper we introduce the notion of uniformity for splitters and universal sets, by requiring each function to have a uniform mapping to its image, and construct almost optimal uniform splitters and universal sets, as well as almost optimal bisectors.

Requiring such global properties might also be useful in practice. As we have already mentioned, an early  motivation of universal sets is 
testing of circuit components where each component relies on at
most $k$ inputs.
In a setting where not too many inputs of the circuit are allowed to
receive a $1$-input traditional universal sets cannot be used.
Uniform universal sets can do the job and they provide an optimal
tradeoff between the number of necessary tests and the allowed hamming
weight of the test inputs.  In particular if half of the inputs are
allowed to be ``active,'' then the number of tests is asymptotically not
bigger than the required number without restriction on the hamming
weight.

Uniform splitters also have potentially more applications than
the original ones, thanks to their additional properties.
For instance, one can use an $(n,k,\ell)$-splitter
to design a basic secret sharing scheme, but with a uniform one you
can guarantee that secrets are not shared with too many parties, which
is a potential security risk.

There are also some applications for bisectors.  The \emph{even set problem}
essentially asks whether an $n\times n$ matrix over ${\bf F}_2$ contains
$k$ row vectors that are linearly dependent (over ${\bf F}_2$).
Only recently it was shown that this problem is $\rm
W[1]$-hard~\cite{BhattacharyyaGS18}.  What is the average case
complexity of this problem?  If we look at a random $\frac n2\times
n$ matrix then the problem is easy to solve:  With overwhelming
probability such a matrix has full rank and only if it does not we
check whether the $k$ row vectors exist.  As this has to be done
only very rarely the expected running time is very fast.  We have,
however, an $n\times n$ matrix and this trick does not work because
with a relatively high probability, the rank is less than~$n$.  We can
still solve the problem efficiently by using a bisector and reducing
the problem for square matrices to the problem for $\frac n2\times n$
matrices. Other examples for
reductions between average case problems that use bisectors can be
found in~\cite{DreierLR2020}.

Another application is \emph{parallelization of black-box search
algorithms.}  If the exponential time hypothesis holds, then finding a
$k$ vertex guest graph as a subgraph in an $n$-vertex host graph cannot
be done in time~$n^{o(k)}$~\cite{CFGKMPS2016}, but of course it can
be done in time~$O(n^{k})$.  Can, thus, an algorithm be parallelized
\emph{without changing the algorithm itself?} Yes, we can apply a
bisector to the vertex set of the host graph and get $2^{k+o(k)}$ many
new host graphs of size $n/2$.  We check all of them in parallel in
time~$O((n/2)^{k})=O(2^{-k}n^k)$ using $2^{k+o(k)}$ processors giving
us an asymptotically optimal speedup.


The notion of splitters and universal sets with global properties is not new.
In a \emph{balanced} splitter introduced by Alon and Gutner,
one requires that every $k$-subset is split by about the same number of functions. 
Balanced splitters can be used for approximate counting~\cite{AlonG2009}.
The notion of balanced splitters is natural in the sense that if one would construct
a splitter at random, one would expect the splitter to have this property.

Thus, our constructions seek to answer a very natural question. Can we
build small deterministic splitters and universal
sets which maintain the same properties we would expect from their randomly built counterparts?
Our answer is partially yes, if one focuses on uniformity. It remains open, however, if one could build
such families to be both uniform and balanced.

Another advantage to our construction is its simplicity, our constructions
use a combination of modulo functions and total enumeration. This
makes these constructions easier to implement than the previously best
known splitters and universal sets from~\cite{NaorSS95}, which rely on
assymptotically good error correcting codes~\cite{AlonBNNR1992}.
The price we pay for this simplicity is a more complicated analysis of
the sizes of our splitters, bisectors, and universal sets.

\subsection{Our Contributions}

As already mentioned, splitters and universal sets were made almost optimal by
Naor et al.~\cite{NaorSS95}.
For instance, they presented an $(n,k)$-universal set of size 
$2^kk^{O(\log k)}\log n$ asymptotically matching the lower bound, while
the previous best universal sets had size 
$O(\min\{k2^{3k}\log n, k^22^{2k}\log^2 n\})$~\cite{AlonBI1986}.
This is not the case for bisectors, maybe because they are a fairly new concept. 
The size of an $(n,k)$-bisector can be easily lower
bounded by $2^k$.
Given an $(n,k)$-bisector $\F$, one can always choose an $x\in[n]$ such that
$f(x)=1$ for at least half of the functions in $\F$. We can construct a set
$S$ containing $x$, and half of $\F$ will not map~$S$ to $0$.
Adding a new element to~$S$ with the same criterion will again halve the candidate functions
for appropriately mapping~$S$. Repeating this procedure $k$ times we realize that any
bisector must contain at least $2^k$ functions.
Unlike in the case of universal sets, the best known bisectors until now have size
$4^k$~\cite{DreierLR2020} using a very straightforward construction.
In this paper, we show that we can construct a bisector of size $2^{k+o(k)}$ in linear time.
It is important to note that these bisectors have sizes independent of $n$, while the size
of a universal set is at least logarithmic in~$n$.

In order to build small bisectors, we use the following strategy depicted 
in the left side of~\cref{fig:bisect}. The idea is that we build a family of functions
$f\colon[k^3]\to\{0,1\}$,
where every $k$-subset of $[k^3]$ is mapped to $0$, and every function maps only $k^3/\sqrt{k}$
elements to $1$, and the rest to $0$. This is more or less easy to do because the number of $1$s
in every function is relatively small, so the chance of finding a good function is large.
Then, we repeat this process many times, every time mapping a fraction of $1/\sqrt{k}$ of the 
remaining $0$s to $1$, until we have a $(k^3,k)$-bisector.
We expand this to an $(n,k)$-bisector using only modulo functions and the Chinese remainder 
theorem, with a size blowup of only a factor of $k$. Lastly, we can speed up the process by
using this construction on subintervals of $[n]$ containing only a few elements of a candidate
set $S$.

Not only is this construction able to build $(n,k)$-bisectors of size $2^{k+o(k)}$ in linear
time, but for any constant $0<\alpha< 1$, we can build families of functions 
$f\colon[n]\to\{0,1\}$ that map every $k$-subset to $0$ and every function in the family
maps exactly a fraction of $\lceil\alpha n\rceil$ elements to $1$ and the rest to $0$.
We call these $(n,k,\alpha)$-bisectors, and we build such bisectors of size
$(1/(1-\alpha))^{k+o(k)}$ in linear time.

If one builds an $(n,k,\ell)$-splitter built at random, one expects that
that each function $f$ maps $[n]$ into $\ell$ parts of almost equal size.
We say that an $(n,k,\ell)$-splitter is \emph{uniform} if it has this property.
In this paper, we build such splitters for $\ell\ge k^3$,
with size $O(k^6\log n)$. These splitters can be used just like the splitters of Naor et al.~\cite{NaorSS95}, and one can additionally take advantage of their uniformity.

The same argument holds when we talk about uniform universal sets
instead of splitters.
If one constructs a random universal set,
one expects to obtain functions which map
more or less half of the elements to $0$ and the other half to $1$.
However, in known determistic constructions this is not the case.
We use uniform splitters as the basis to construct uniform universal sets.
Building uniform universal sets for large values of $n$ is not easy,
but, just as in the bisector case,
we start by building uniform sets for $n=k^3$ 
which maps a fraction of any $k$-subset to $1$, as in the right side
of~\cref{fig:bisect}.  Then, we make the size bigger by combining a
uniform $(n,k,k^3)$-splitter with one such uniform set. Here, one needs
to be careful about respecting the uniformity. Finally we can take for
every possible subdivision of a subset into two parts a uniform set
that will map them appropriately. The union of all those sets will be a
uniform $(n,k)$-universal set. Here again, just as with the bisectors,
we can be flexible about the uniformity of the functions and build for any
constant $0<\alpha\le 1/2$ a uniform $(n,k,\alpha)$-universal set, where
every function maps exactly $\lceil\alpha n\rceil$ elements to $1$ and
the rest to $0$. This is one of the main results of our paper.  Not only,
do these sets have size $(1/\alpha)^{k+o(k)}\log n$, but they can also
be built in linear time. This means that they can be used everywhere
you can use the
universal sets by Naor, Schulman, and Srinivasan~\cite{NaorSS95},
but they are additionally uniform.

The paper is structured as follows. First, we present uniform $(n,k,\ell)$-splitters for $\ell\ge k^3$.
Then, we present bisectors whose size is independent of $n$, and finally, we build uniform universal sets.


\section{Uniform \mbox{\boldmath$(n,k,\ell)$}-splitters}\label{sec:splitters}

A splitter is a family of functions from $[n]$ to $[\ell]$ such that for every $k$-subset there is
a function that distributes its elements evenly. Formally:

\begin{definition}
Let $n$, $k$, and $\ell\le n$ be integers. An \emph{$(n,k,\ell)$-splitter} is a family $\F$ 
of functions $[n]\to [\ell]$ such that for every subset $S\subseteq [n]$ with $|S|=k$,
there is an $f\in\F$ that splits $S$ into $\ell$ parts 
$\lfloor k/\ell \rfloor \le |f^{-1}(i)\cap S|\le \lceil k/\ell \rceil$
for $i=1,\hdots,\ell$. 
\end{definition}

Observe, that if $\ell\ge k$, then given $i,j\in S$ with $i\neq j$,
it is enough if $f(i)\neq f(j)$, and in that case it 
is not necessary that $\im(f)=[\ell]$ for every $f\in \F$.
Constructing a splitter consists on choosing one such set of functions and listing 
them as a table. This means that the time complexity of building a splitter is $\Omega(n|\F|)$.

For $\ell=k^2$,  Naor et al.~\cite{NaorSS95} show that 
there is an $(n,k,k^2)$-splitter of size $O(k^6 \log k \log n)$.
This family was obtained using asymptotically good error correcting codes, and if
we apply the same error correcting codes to obtain a splitter with $\ell=k^3$, we 
can obtain an $(n,k,k^3)$-splitter of size $O(k^4\log k\log n)$.

Our goal is to build splitters of similar size in a way that
does not use such elaborate constructions. In fact, our construction only uses
modulo functions and brute force search,
and the splitters we construct are additionally uniform. 
Let us define what we mean by uniform
splitters.

\begin{definition}
 Let $n$, $k$, and $\ell$ be integers. A \emph{uniform $(n,k,\ell)$-splitter} is an $(n,k,\ell)$-splitter $\F$ such that 
 for each $f\in\F$ and every $i\in \im(f)$, $\lfloor n/|\im(f)|\rfloor\le |f^{-1}(i)|\le \lceil n/|\im(f)|\rceil$.
\end{definition}

A relaxed notion of this includes the possibility that there is some
unevenness in the preimage.

\begin{definition}
 Let $n$, $k$, $\ell$, and $a$ be integers. An \emph{$a$-uniform $(n,k,\ell)$-splitter} is an $(n,k,\ell)$-splitter $\F$ such that 
 for each $f\in\F$ and every $i\in \im(f)$, $\lfloor n/|\im(f)|\rfloor-a\le |f^{-1}(i)|\le \lceil n/|\im(f)|\rceil+a$,
 and for every $i,j\in\im(f)$,
 ${\bigm|}|f^{-1}(i)|-|f^{-1}(j)|{\bigm|}\le a$.
\end{definition}

So the difference between the sizes of two preimages is never greater than $a$.
A final notion of uniformity can even be stronger, if the image of every
function must be $[\ell]$.

\begin{definition} 
 Let $n$, $k$, and $\ell$ be integers. A \emph{strongly uniform $(n,k,\ell)$-splitter} is an $(n,k,\ell)$-splitter $\F$ such that 
 for each $f\in\F$ and every $i\in[\ell]$, $\lfloor n/\ell\rfloor\le |f^{-1}(i)|\le \lceil n/\ell \rceil$.
\end{definition}

From now on, we assume that we talk about splitters where $\ell\ge k$, unless otherwise stated.
There are some ways to strengthen the uniformity of a splitter without making it too much bigger.

\begin{lemma}[Smoothing Lemma]\label{lem:smooth}
Let $\cal F$ be an $a$-uniform $(n,k,\ell)$-splitter.  If $n\geq a\ell(k+1)$
then we can construct from $\cal F$ a uniform $(n,k,\ell)$-splitter $\cal
F'$.  The time to compute $\cal F'$ from $\cal F$ is linear and
$|{\cal F'}|=(k+1)|\cal F|$.  
\end{lemma}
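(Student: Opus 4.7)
The plan is to smooth each $f\in\F$ independently into $k+1$ uniform replacements $f_0,\ldots,f_k$ and take $\F'=\bigcup_{f\in\F}\{f_0,\ldots,f_k\}$. Fix $f$ and set $m=|\im(f)|$. By $a$-uniformity every fibre size lies in $[\lfloor n/m\rfloor-a,\lceil n/m\rceil+a]$, so each fibre of $\im(f)$ is either \emph{over-full} with excess $e_i:=|f^{-1}(i)|-\lceil n/m\rceil\in[1,a]$, \emph{under-full} with symmetric deficit $d_i\in[1,a]$, or already on target. The goal is to relocate the $\sum_i e_i$ excess elements into the under-full fibres until every fibre size equals $\lfloor n/m\rfloor$ or $\lceil n/m\rceil$; since no fibre is emptied and no new fibre is opened, $\im(f_j)=\im(f)$, and this is exactly the uniform condition.

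The crucial step is building enough flexibility into the relocation. For each over-full fibre $i$, list $f^{-1}(i)$ in any fixed order and carve out $k+1$ pairwise disjoint consecutive blocks $M_i^0,\ldots,M_i^k$ of length $e_i$. The hypothesis $n\ge a\ell(k+1)$ yields $n/m\ge n/\ell\ge a(k+1)\ge(k+1)e_i$, hence $|f^{-1}(i)|\ge\lceil n/m\rceil\ge (k+1)e_i$ and the blocks fit. Define $f_j$ by reassigning every $x\in\bigcup_i M_i^j$ to an under-full fibre via a single fixed greedy rule that tops up the deficits in, say, lexicographic order. By construction each $f_j$ is uniform, and producing all $(k+1)|\F|$ functions costs time linear in the output size.

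For the splitter guarantee, let $S$ be a $k$-subset that $f$ splits. Because $\ell\ge k$, $|S\cap f^{-1}(i)|\le 1$ for every $i$, so each over-full fibre holds at most one element of $S$, and that element lies in at most one of the disjoint blocks $M_i^0,\ldots,M_i^k$. Summing over $S$, at most $|S|=k$ of the indices $j\in\{0,\ldots,k\}$ relocate any element of $S$, so at least one $f_{j^*}$ leaves $S$ entirely untouched. For that $f_{j^*}$ the over-full fibres lose only non-$S$ elements and the under-full fibres gain only non-$S$ elements, hence $|S\cap f_{j^*}^{-1}(i)|=|S\cap f^{-1}(i)|$ for every $i$ and $f_{j^*}$ splits $S$ in the same way $f$ does.

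The main obstacle I expect is the bookkeeping needed to verify that the greedy redistribution lands every fibre in $\{\lfloor n/m\rfloor,\lceil n/m\rceil\}$ without overshooting a deficit, together with the boundary case $|\im(f)|<\ell$. These follow from $\sum_i e_i=\sum_i d_i$ and $d_i\le a$ but require a short case distinction. The conceptual heart is the disjointness of the $k+1$ mover-blocks, which turns an elementary pigeonhole on $j$ into a guarantee that every splitting property of $f$ survives in at least one smoothed copy.
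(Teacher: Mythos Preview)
Your proof and the paper's share the same core mechanism: manufacture $k+1$ candidate sets of ``movable'' elements so that a pigeonhole on $|S|=k$ guarantees one of them misses~$S$, then redistribute that clean set to balance the fibres. The instantiation, however, is different. You carve $k+1$ disjoint blocks of size $e_i$ inside each over-full fibre and move only the total excess. The paper instead lays out $[n]$ as a two-dimensional table with one column per value in $\im(f)$ and slices it into $k+1$ \emph{horizontal} stripes: the first $k$ stripes evenly divide rows $0,\ldots,h-1$ (where $h$ is the minimum column height) and the last stripe contains everything above row~$h$. Every column meets every middle stripe, so each element of $S$ dirties exactly one stripe; the elements of the clean stripe are then redistributed wholesale. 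This moves roughly $n/(k+1)$ elements rather than at most $a\ell$, which buys the paper freedom from precisely the excess-versus-deficit accounting you flag as the main obstacle; its only check is that a middle stripe carries at least $(a-1)\ell'$ elements, which follows from $\lfloor h/k\rfloor\ge a-1$.

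One claim in your write-up is incorrect as stated: the identity $\sum_i e_i=\sum_i d_i$ fails when $m\nmid n$. With $e_i=|f^{-1}(i)|-\lceil n/m\rceil$ and $d_i=\lfloor n/m\rfloor-|f^{-1}(i)|$ a short count gives $\sum e_i-\sum d_i=(n\bmod m)-\#\{i:|f^{-1}(i)|\ge\lceil n/m\rceil\}$, which need not vanish. Concretely, take $n=19$, $m=\ell=3$, $a=2$, $k=2$ (so the hypothesis $n\ge a\ell(k+1)=18$ holds) and fibre sizes $(7,7,5)$: no fibre is over-full, your scheme moves nothing, yet the size-$5$ fibre is under-full. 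The repair is routine---additionally designate the right number of $\lceil n/m\rceil$-fibres as donors with excess~$1$ (or $\lfloor n/m\rfloor$-fibres as receivers), and observe that the hypothesis still forces $\lfloor n/m\rfloor\ge k+1$ so that $k+1$ singleton blocks fit in any such donor---but the justification you give does not go through without this amendment.
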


\begin{proof}
Let $f\colon[n]\to[\ell']$ be a function from~$\cal F$ for some $\ell'\le \ell$.  We can construct
a function $g\colon[n]\to[\ell']\times{\bf N}$ with the following
properties:  (1)~While $f$ is clearly not injective, $g$ is bijective.
(2)~If $g(i)=(x,y)$ then $f(i)=x$.  (3)~If $I=f^{-1}(i)$ then
$g(I)=\{(i,0),(i,1),(i,2),\ldots,(i,|I|-1)$.  We can use $g(i)=(x,y)$
as the coordinates in two dimensional table into which we map the
elements of~$[n]$.  This table has $\ell'$ columns.
Figure~\ref{fig:smooth} shows an example of such a table.
\begin{figure}
\centerline{\includegraphics{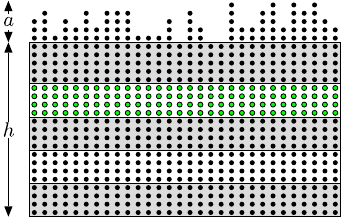}\hfil
            \includegraphics{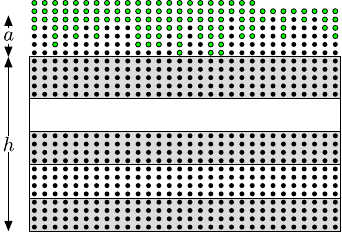}}
\caption{An example of a $5$-uniform $(n,k,\ell)$-splitter with
$n=713$, $k=5$, $\ell=30$.  On the left side one function maps the
$n$ numbers into the $l$ columns.  Due to the $5$-uniformity the
heights of the columns differ by at most~$5$.  We
construct a new function by choosing the green stripe and
redistributing its elements.
If the green
stripe is clean, then the new function is injective on~$S$.
}
\label{fig:smooth}
\end{figure}

Because $\cal F$ is $a$-uniform the number of elements in each column
is nearly the same -- their numbers differ at most by~$a$.  Let us
denote the minimum number of elements in a column by~$h$.  It is clear
that $\ell'(h+a)\geq n\geq \ell'h$.  We divide the table into $k+1$ stripes.
The first $k$ stripes have together height~$h$ which is distributed
equally among them.  Hence, the height of a stripe is $\lfloor h/k\rfloor$
or~$\lceil h/k\rceil$.  The last stripe contains all elements above~$h$.

Let us now assume that $f$ splits a set $S\subseteq[n]$, $|S|=k$, in
such a way that every element from $S$ is mapped to a different column.
There must be at least one stripe that does not contain an element of~$S$.
We call such stripes \emph{clean} and the others \emph{dirty}.  We
construct $k+1$ new functions that behave mostly like $f$, but redistribute
the elements of one of the stripes to make the sizes of all columns
equal or almost equal.  This is always possible for the top-most stripe, but
it is not guaranteed for the rest of the stripes. However, it is enough
if a middle stripe contains at least $(a-1)\ell'$ elements.
As we have already mentioned, a middle stripe will contain at least
$\lfloor h/k\rfloor\ell'$ elements, so we only need to prove that
$\lfloor h/k\rfloor\ge a-1$. We know that $\ell'(h+a)\geq n$,
and by the precondition we also know that $n\geq a\ell(k+1)\geq a\ell'(k+1) $.
If we put the two together this means that $h\ge ak$, 
which in its turn means that $\lfloor h/k\rfloor\ge h/k -1\ge  a-1$.
%
\end{proof}

%
%
%

We build our splitters based on the following properties of the modulo functions.
We know that $a-b\equiv0 \pmod m$ if and only if $a=b \bmod m$. Moreover, the modulo function
also behaves well with the product in the following sense, if $a\equiv0 \pmod m$ or $b\equiv0 \pmod m$
then $ab\equiv0 \pmod m$.  If $m$ is a prime number, then the converse is also true.
Finally, we also observe that modulo functions are uniform.

We also use the following observation.
For every subset $S$ of size $k$ in $[n]$, we can select all of the differences between elements in $S$,
there are ${k\choose 2}$ of them, and multiply them
together to get a number smaller than $n^{k^2/2}$. If we select enough
modulo functions of prime numbers between 
$k^2$ and $\ell$ we can guarantee by the Chinese remainder theorem
that none of the differences will evaluate to $0$
for some of the selected modulo functions. However, guaranteeing that there are enough prime numbers to perform this
trick is not completely straightforward, as we are going to see.

First we need the following observation, which follows from a
variant of the prime number theorem~\cite{RS62}.
\begin{obs}\label{obs:primenumthm}
 Given $k$ and $n$ such that $n\ge k\ge2$,
 \[\pi(k^2\log n)-\pi(k^2\log n/2)\geq \lceil k(k-1)\log n/(4\ln k+2\ln\log n-2\ln2))\rceil.\]
\end{obs}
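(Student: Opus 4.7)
The plan is to reduce the claim to a direct application of an effective prime number theorem bound, such as the Rosser--Schoenfeld bounds~\cite{RS62}.

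First, I would set $x = k^2\log n$ and rewrite both sides. The denominator simplifies as
\[
4\ln k + 2\ln\log n - 2\ln 2 = 2\bigl(\ln(k^2\log n) - \ln 2\bigr) = 2\ln(x/2),
\]
and the numerator rewrites as $k(k-1)\log n = x(1-1/k)$. So the observation becomes
\[
\pi(x) - \pi(x/2) \geq \left\lceil \frac{x(1 - 1/k)}{2\ln(x/2)} \right\rceil.
\]

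Next, I would invoke the effective Rosser--Schoenfeld bounds: a lower bound of the form $\pi(y) \geq y/\ln y$ and an upper bound of the form $\pi(y) \leq (y/\ln y)\bigl(1 + c/\ln y\bigr)$ for a small explicit constant $c$, both valid for $y$ above a modest threshold. Writing $L = \ln(x/2)$ so that $\ln x = L + \ln 2$, substitution yields
\[
\pi(x) - \pi(x/2) \geq \frac{x}{L + \ln 2} - \frac{x}{2L}\left(1 + \frac{c}{L}\right),
\]
which after routine algebra takes the shape $\frac{x}{2L}(1 - \varepsilon)$ with $\varepsilon = O(1/L)$. Since $n \geq k \geq 2$ forces $L \geq \ln(k^3/2)$, the error term $\varepsilon$ shrinks as $k$ grows and can be bounded above by $1/k$ for all but the smallest $k$, exactly matching the $(1-1/k)$ slack on the right-hand side.

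The main obstacle I expect is numerical tightness: the multiplicative constant in the upper bound for $\pi(x/2)$, together with the additive $\ln 2$ gap between $\ln x$ and $\ln(x/2)$, must together be small enough that the $(1-1/k)$ factor on the right absorbs them. A secondary obstacle is the handful of small pairs $(k,n)$ for which $x$ falls below the validity threshold of the effective PNT bounds; these can be dispatched by direct computation of $\pi$ on the short interval $(x/2, x]$.
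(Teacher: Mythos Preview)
Your approach matches the paper's: both apply the effective Rosser--Schoenfeld estimates to $\pi(x)$ and $\pi(x/2)$ and then clear denominators. Your substitution $x=k^2\log n$, $L=\ln(x/2)$ and the rewriting of the target as $\lceil x(1-1/k)/(2L)\rceil$ are exactly the right simplifications; the paper does the same thing with $t=\ln x=L+\ln2$ and then multiplies through by $4t^2(t-\ln2)^2$ to reduce everything to a polynomial inequality in~$t$.

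There is, however, a genuine gap in your error analysis. You write that after substituting the bounds you obtain $\frac{x}{2L}(1-\varepsilon)$ with $\varepsilon=O(1/L)$, and that since $n\ge k\ge 2$ forces $L\ge\ln(k^3/2)$ the error ``can be bounded above by $1/k$ for all but the smallest~$k$.'' This step fails: $L$ grows only \emph{logarithmically} in~$k$ (at worst $L\approx 2\ln k+\ln\log k$ when $n=k$), so $\varepsilon=O(1/L)$ is of order $1/\ln k$, which is far larger than the available slack~$1/k$. Concretely, with the bounds you quote one gets $\varepsilon\approx(2\ln2+c)/L$, and for this to be at most $1/k$ one would need $k^2\log n$ to be exponential in~$k$ --- vastly stronger than the hypothesis $n\ge k$.

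A related point: you propose the crude lower bound $\pi(y)\ge y/\ln y$, whereas the paper uses the sharper Rosser--Schoenfeld form $\pi(y)>\frac{y}{\ln y}\bigl(1+\frac{1}{2\ln y}\bigr)$. That second-order term matters: without it the discrepancy between $x/\ln x$ and $(x/2)/\ln(x/2)$ already costs a constant-over-$L$ factor before any upper-bound error is subtracted, so the heuristic ``$\frac{x}{2L}(1-O(1/L))$'' is too optimistic from the start. The paper's route is not to absorb an $O(1/L)$ error into a $1/k$ slack but to keep every term exactly and argue positivity of the resulting cubic in~$t$ directly.
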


\begin{proof}
 First we approximate the first two terms of the inequality using the following variant of the prime number theorem.
 For every $x\geq59$ Rosser and Schoenfeld~\cite{RS62} show that
\[
{x\over\ln x}\Bigl(1+{1\over 2\ln x}\Bigr)<\pi(x)
<{x\over\ln x}\Bigl(1+{3\over 2\ln x}\Bigr).
\]
So, we can approximate $\pi(k^2\log n)-\pi(k^2\log n/2)$ as:
 \begin{align*}
 \frac{k^2\log n}{\ln (k^2\log n)}\Bigl(1+\frac{1}{2\ln (k^2\log n)}\Bigr)
 &-\frac{k^2\log n}{2\ln (k^2\log n)-2\ln 2}\Bigl(1+\frac{3}{2\ln (k^2\log n)-2\ln 2}\Bigr)\\
 &\ge\frac{k^2\log n}{2\ln (k^2\log n)-2\ln2}-\frac{k\log n}{2\ln (k^2\log n)-2\ln2}\;.
 \end{align*}
 We can immediately take $k\log n$ as a common factor out of the equation, as it is never $0$. We also name $t=\ln (k^2\log n)$
 \begin{align*}
 \frac{k}{t}\bigl(1+\frac{1}{2t}\bigr)
 -\frac{k}{2(t-\ln2)}\bigl(1+\frac{3}{2(t-\ln2)}\bigr)\ge
 \frac{k}{2(t-\ln2)}-\frac{1}{2(t-\ln2)}\;.
 \end{align*}
 Now, we can easily multiply the whole inequality by $4t^2(t-\ln2)^2>0$ to get rid of the denominators and leave
 the whole inequality as a polynomial in $t$.
 \begin{align*}
  &4kt(t-\ln 2)^2+2k(t-\ln 2)^2-2kt^2(t-\ln 2)+3kt^2\ge 2kt^2(t-\ln 2)-2t^2(t-\ln 2)\\
  &4kt(t-\ln 2)^2+2k(t-\ln 2)^2-(4k-2)t^2(t-\ln 2)+3kt^2\ge 0\\
  &4k(t^3-2\ln2 t^2+(\ln2)^2t)+2k(t^2-2\ln2t+(\ln2)^2)-(4k-2)(t^3-\ln2 t^2)+3kt^2\ge0\\
  &2t^3+(5k-4k\ln2-2\ln2)t^2+4k((\ln2)^2-\ln2)t+2k(\ln2)^2\ge0\;.
 \end{align*}
 It is immediate to see that the last inequality holds for any $k\ge2$ as the only negative
 term is the one multiplied by $t$ and the terms in $t^3$ and $t^2$ dominate it, for instance if
 $t\ge 4$ the term in $t^2$ is enough. 
\end{proof}

\begin{lemma}\label{lem:s1}
Let $n,k,\ell\in\N$ with $k\ge8$, and $\ell\geq k^2\log n$.
We can construct a uniform $(n,k,\ell)$-splitter of size at most 
$k^2\log n/\log\ell$ in linear time.
\end{lemma}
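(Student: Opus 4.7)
The plan is to build the splitter from modulo functions: for every prime $p\le\ell$ set $f_p\colon[n]\to[p]\subseteq[\ell]$ by $f_p(x)=x\bmod p$. Each such $f_p$ is automatically uniform, because $|f_p^{-1}(i)|\in\{\lfloor n/p\rfloor,\lceil n/p\rceil\}$ for every $i\in\im(f_p)$, which is exactly the uniform-splitter condition. Since $\ell\ge k$, an $(n,k,\ell)$-splitter only needs every function to be injective on every $k$-subset, and for $f_p$ injectivity on $S$ is equivalent to $p\nmid D(S)$, where $D(S):=\prod_{i<j,\,s_i,s_j\in S}|s_j-s_i|$ is the product of the $\binom{k}{2}$ pairwise differences in $S$, satisfying $D(S)\le n^{k(k-1)/2}$. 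So the task reduces to finding a small set of primes $p_1,\dots,p_s\le\ell$ such that, for every $k$-subset~$S$, at least one $p_i$ does not divide $D(S)$.

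For this I would take any $s:=\lceil k^2\log n/\log\ell\rceil$ distinct primes from the interval $[\sqrt{\ell},\ell]$ (in any order). Fix a $k$-subset $S$. Since each chosen prime is at least $\sqrt{\ell}$, the number of them dividing $D(S)$ is bounded by $\log D(S)/\log\sqrt{\ell}\le k(k-1)\log n/\log\ell$, which is strictly less than $s$ because $k(k-1)<k^2$. By pigeonhole some $p_i$ does not divide $D(S)$, and the corresponding $f_{p_i}$ splits~$S$. The resulting family is then a uniform $(n,k,\ell)$-splitter of size at most~$s$.

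The construction is linear-time in the usual sense: a sieve produces the primes up to $\ell$ in time $O(\ell)$, and tabulating each of the $s$ functions on $[n]$ costs $O(n)$, for a total of $O(\ell+ns)$ which is linear in the output size. The main obstacle is the quantitative prime-counting step: I still need to check that $[\sqrt{\ell},\ell]$ actually contains $s$ primes. Since the target size $k^2\log n/\log\ell$ is tight up to constants, this cannot be handled by asymptotic PNT but needs explicit Rosser--Schoenfeld-style bounds in the style of the preceding Observation, and the hypotheses $\ell\ge k^2\log n$ and $k\ge 8$ are used precisely to absorb the lower-order error terms in that explicit estimate.
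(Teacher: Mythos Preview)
Your proposal is correct and essentially identical to the paper's argument: both take modulo-by-prime functions $f_p$ (automatically uniform), form the product $D(S)$ of pairwise differences, and argue that not all of the chosen primes can divide $D(S)$, so some $f_p$ is injective on~$S$. The only cosmetic difference is the prime window: the paper draws its $r'$ primes from the narrower interval $[\ell-k^2\log n/2,\ell]$ (phrasing the divisibility step via the Chinese Remainder Theorem and invoking the preceding Observation for the count), whereas your wider window $[\sqrt{\ell},\ell]$ costs a harmless factor~$2$ in the divisor bound---absorbed by $k(k-1)<k^2$---and in exchange makes the Rosser--Schoenfeld verification you flag somewhat easier.
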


\begin{proof}
Use functions $f_m\colon [n]\mapsto[m], x\mapsto x\bmod m$
for the last $r=\lceil k(k-1)\log n/(4\ln k+2\ln\log n-2\ln2))\rceil$
many prime numbers
smaller than $k^2\log n$. 
Finding every prime number smaller than $k^2\log n$
through the sieve of Erathostenes takes $O(k^2\log n \log(k^2\log n)$, which
is possible in linear time in the size of the splitter. 
If we choose the functions $f_m$ in this way
then the smallest $m$ is at least $k^2\log n/2$
because $\pi(k^2\log n)-\pi(k^2\log n/2)\geq r$, see \Cref{obs:primenumthm}.

The product of all these prime numbers is at least
$(k^2\log n/2)^r\geq n^{k(k-1)/2}$.
For bigger values of $\ell$ we can choose analogously the largest $r'=k^2\log n/\log \ell$ primes smaller than $\ell$
without reaching $\ell-k^2\log n/2$, and $(\ell-k^2\log n/2)^{k^2\log n/\log \ell}\ge n^{k(k-1)/2}$.

If $S\subseteq[n]$, $|S|=k$, then there are ${k\choose2}$ pairs
$x,y$ from~$S$.  If we take the difference $|x-y|$ of such a pair, it is
not zero.  The product of all differences of all pairs is a number $z$
smaller than $n^{k(k-1)/2}$, but strictly greater than zero.  By the
Chinese remainder theorem there is
an $m$ such that $f_m(z)\neq0$.  This implies that $f_m(x)\neq f_m(y)$
for \emph{every} pair~$x,y$.  For every possible $S$ we have at least
one injective function, so they form a splitter.  The splitter is
uniform because every modulo function is uniform.
\end{proof}

%



\begin{lemma}\label{cor:s2}
Let $n,k,\ell\in\N$ and $\ell\geq k^3$.
Then we can construct a 
$\lceil n/(2^{\ell/k^2}- k^2\log n/2)\rceil$-uniform $(n,k,\ell)$-splitter
size $O(k^4\log n/\log \ell)$ in linear time,
and if $k\ge\log\log n$ we can also construct a uniform $(n,k,\ell)$-splitter of
size $O(k^5\log n/\log \ell)$ also in linear time.
\end{lemma}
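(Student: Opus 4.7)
The plan is to compose two applications of \cref{lem:s1} and then invoke the Smoothing Lemma (\cref{lem:smooth}) to upgrade the result to full uniformity. Set the intermediate image size to $\ell_1 := 2^{\ell/k^2}$, chosen so that both stages of the composition just barely satisfy \cref{lem:s1}'s preconditions.

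In stage~1, I apply \cref{lem:s1} with $\ell_1$ in the role of $\ell$ to obtain a uniform $(n,k,\ell_1)$-splitter $\mathcal{F}_1$. The precondition $\ell_1 \geq k^2\log n$ is equivalent to $\ell \geq k^2\log(k^2\log n)$ and follows from $\ell\ge k^3$ combined with $k\ge\log\log n$; if $k<\log\log n$ then $n/(2^{\ell/k^2}-k^2\log n/2)$ is already vacuous, making the first claim trivial. Inspecting the proof of \cref{lem:s1}, each $f\in\mathcal{F}_1$ is a modulo function $x\mapsto x\bmod p_f$ for a prime $p_f\in[\ell_1-k^2\log n/2,\,\ell_1]$; in particular $p_f\ge 2^{\ell/k^2}-k^2\log n/2$, and $|\mathcal{F}_1|=O(k^2\log n/\log\ell_1)=O(k^4\log n/\ell)$. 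In stage~2, I apply \cref{lem:s1} with $\ell_1$ as the universe and $\ell$ as the image; the precondition $\ell\ge k^2\log\ell_1$ becomes $\ell\ge\ell$, satisfied with equality by the choice of $\ell_1$, and it is here that $\ell\ge k^3$ is crucially used. This yields a family $\mathcal{F}_2$ of modulo functions $[\ell_1]\to[\ell]$ of size $O(k^2\log\ell_1/\log\ell)=O(\ell/\log\ell)$.

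Define $\mathcal{F}=\{g\circ f:f\in\mathcal{F}_1,\,g\in\mathcal{F}_2\}$. Then $|\mathcal{F}|=|\mathcal{F}_1|\cdot|\mathcal{F}_2|=O(k^4\log n/\log\ell)$, and $\mathcal{F}$ is an $(n,k,\ell)$-splitter: for any $k$-subset $S\subseteq[n]$ some $f\in\mathcal{F}_1$ is injective on $S$, and then some $g\in\mathcal{F}_2$ is injective on the $k$-subset $f(S)\subseteq[\ell_1]$, so $g\circ f$ is injective on $S$. Linear construction time is inherited from both stages. For the $a$-uniformity, observe that $(g\circ f)^{-1}(i)=\bigcup_{j\in g^{-1}(i)\cap\im(f)}f^{-1}(j)$; every $|f^{-1}(j)|$ is at most $\lceil n/(2^{\ell/k^2}-k^2\log n/2)\rceil$, and since $g$ is a modulo function and $\im(f)$ is essentially an interval, $|g^{-1}(i)\cap\im(f)|$ differs by at most one across $i$. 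Careful bookkeeping then bounds the preimage-size spread of $g\circ f$ by exactly $a:=\lceil n/(2^{\ell/k^2}-k^2\log n/2)\rceil$.

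For the second assertion, I apply \cref{lem:smooth} to $\mathcal{F}$. Substituting $a$ into its precondition $n\ge a\ell(k+1)$ gives $2^{\ell/k^2}-k^2\log n/2\ge\ell(k+1)$, which for $\ell\ge k^3$ and $k\ge\log\log n$ is dominated by $2^k\ge k^4$. Smoothing multiplies the size by $k+1$ and preserves linearity, giving a uniform $(n,k,\ell)$-splitter of size $O(k^5\log n/\log\ell)$. The main obstacle is the $a$-uniformity analysis of the composition: a naive union bound produces an extra additive term of order $p_f/p_g$ alongside $n/p_f$, and one must use the modulo-specific structure to argue that $n/p_f$ alone controls the spread and yields exactly the stated $a$.
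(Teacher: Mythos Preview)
Your approach is the paper's: compose two instances of \cref{lem:s1} with intermediate target size $\ell'\approx 2^{\ell/k^2}$, analyse the $a$-uniformity of the composite, then invoke the Smoothing Lemma. Two differences are worth flagging.

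First, your case split for the first claim is not quite right. You dismiss $k<\log\log n$ as making the uniformity parameter ``vacuous,'' but there are parameter regimes (say $\ell$ close to $k^2\log n$) where the denominator $2^{\ell/k^2}-k^2\log n/2$ remains positive and the claim is substantive, yet your stage-1 precondition $\ell_1\ge k^2\log n$ still fails. The paper instead opens by observing that if $\ell\ge k^2\log n$ then \cref{lem:s1} alone already produces a uniform splitter of size $O(k^2\log n/\log\ell)$, which settles both claims in that regime; only afterwards does it assume $\ell<k^2\log n$ (hence $k\le\log n$) and run the two-stage composition.

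Second, the paper does not fix one stage-2 splitter on $[\ell_1]$ as you do, but for each stage-1 function $f_1\colon[n]\to[m']$ builds a tailored $(m',k,\ell)$-splitter via \cref{lem:s1}. This keeps the uniformity analysis over two genuine modulo maps on matched domains rather than over restrictions $g^{-1}(i)\cap\im(f)$; your variant works but is slightly more delicate. On the crux you correctly isolate in your final paragraph, the paper's argument is exactly the modulo-specific one you anticipate: the large fibres of $f_1$ occupy an initial segment of $[m']$, the fibres $f_2^{-1}(0)$ and $f_2^{-1}(r)$ interleave at spacing $m$, and comparing how many of each fall inside that initial segment bounds the spread of $|(f_2\circ f_1)^{-1}(\cdot)|$ by $\lceil n/m'\rceil\le\lceil n/(\ell'-k^2\log n/2)\rceil$.
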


\begin{proof}
If $\ell\geq k^2\log n$, then  \Cref{lem:s1} yields the desired
result.  Therefore, we can assume that $ k\leq\log n$.

In this case, we can choose $\ell'\ge\lceil k^2\log n\rceil$ and apply  \Cref{lem:s1}.
This gives us a uniform $(n,k,\ell')$-splitter of size $O(k^2\log n/\log \ell')$.
We choose the $\ell'$ such that 
$\ell\ge k^2\log \ell'$ and we can use  \Cref{lem:s1} again to get
a uniform $(\ell',k,\ell)$-splitter of size~$O(k^2\log \ell'/\log \ell)$.  
Combining both splitters yields an $(n,k,\ell)$-splitter.  Its size is the
product of the two sizes, i.e., $O(k^4\log n/\log \ell)$.

In particular what we do is a bit more subtle than that. We know that the first splitter will contain functions
$f_1\colon[n]\to [m']$ for some $m'\le \ell'$, and for every particular $f_1$ 
we can construct a uniform $(m',k,\ell)$-splitter of size 
$O(k^2\log m'/\log\ell)$, which is even better than expected. The $(m',k,\ell)$-splitter
corresponding to each $f_1$ will have functions $f_2\colon[m']\to [m]$ for some $m\le \ell$.
Each function in the final splitter is the result of composing two modulo functions. 
For instance, $f=f_2\circ f_1$ yields a function that maps $[n]\to[m]$, 
now we ask ourselves what is the uniformity of $f$.

We know that the biggest preimage of $f_2$ has to be $f_2^{-1}(0)$ and let $r$ be an integer
such that $|f_2^{-1}(r)|=|f_2^{-1}(0)|-1$. The elements of
$f_2^{-1}(0)$ and $f_2^{-1}(r)$ are the preimage
of a modulo function, so they are evenly distributed at distance $m$ from each other.
Similarly biggest preimages of $f_1$ have to be concentrated in the smallest integers, let $i$ be the first value such that
$\lfloor n/m'\rfloor =f_1^{-1}(i)<f_1^{-1}(0)$.
Imagine that there are $s$ values of $f_2^{-1}(0)$ that are smaller than $i$, and $t$ that are larger,
correspondingly, there are $s'$ values of $f_2^{-1}(r)$ smaller than $i$, and $t'$ that are larger.
But we know that the preimage of $0$ and $r$ are alternated and only have a difference of one, 
so we know that $s+t=s'+t'-1$ and $s'\ge s-1$ and $t'\ge t-1$, thus, at worst $s'=s-1$ and $t'=t$,
and $|f^{-1}(0)|-|f^{-1}(r)|\le \lceil n/m'\rceil \le \lceil n/(\ell'-k^2\log n/2)\rceil$. This gives us
the uniformity of the combined splitter.

On the other hand, we can make this splitter uniform by using \Cref{lem:smooth},
where $a=\lceil n/(\ell'-k^2\log n/2)\rceil$, if $\ell'\ge \ell(k+1)+k^2\log n/2 $.

Recall that if $\ell\ge k^2\log n$ we can use 
\Cref{lem:s1} directly and obtain a uniform splitter of the desired size.
Last but not least, we need to appropriately choose $\ell'$. 
We want to make $\ell'$ as large as possible in order to have a splitter that is as uniform as possible,
and thus also easier to make completely uniform. So, we choose\
$\ell'=\lfloor 2^{\ell/k^2}\rfloor\le  n$ and obtain an $\lceil n/(2^{\ell/k^2}- k^2\log n/2)\rceil$-uniform $(n,k,\ell)$-splitter of size
$O(k^4\log n/\log \ell)$, and we can make it a 
uniform $(n,k,\ell)$-splitter of size $O(k^5\log n/\log \ell)$ with \Cref{lem:smooth} if $2^{\ell/k^2}\ge \ell(k+1)+k^2\log n/2$,
so, in particular if $k\ge \log \log n$ this is always possible.
\end{proof}

%

We could ask ourselves if we can reach even smaller values of $k$ by applying \Cref{lem:s1}
more times consecutively. But, every time we apply this lemma, we get an additional factor of
$k^2$. Moreover, the uniformity only gets worse with consecutive applications of the lemma. 
However, it is not necessary to do this, as once $k$ is small enough it is very easy to 
construct a splitter with brute force, as we see in the following lemma.

\begin{lemma}\label{lem:bruteforce}
Let $k\leq(\log\log n)^2$, and $\ell\ge k^2$.
We can construct a strongly uniform $(\lceil(\log\log n)^6\rceil,k,\ell)$-splitter of size
$O(k\log\log\log n)$ in linear time.
\end{lemma}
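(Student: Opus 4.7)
The plan is to combine the probabilistic method with a simple greedy derandomization, exploiting the fact that the universe size $N=\lceil(\log\log n)^6\rceil$ is so tiny that enumerating all strongly uniform functions $[N]\to[\ell]$ costs only $n^{o(1)}$ time. If $\ell>N$ then any injection $[N]\to[\ell]$ is vacuously strongly uniform and splits every $k$-subset, so a single function suffices; hence I may assume $k^2\le\ell\le N$ in what follows.

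The key estimate is that a function $f$ drawn uniformly at random from the collection of all strongly uniform functions $[N]\to[\ell]$ is injective on any fixed $k$-subset $S$ with probability at least $1/2$. To see this, consider placing the elements of $S$ one at a time: the conditional probability that the $j$-th element lands in one of the parts already used by the previous $j-1$ is at most roughly $(j-1)\lceil N/\ell\rceil/(N-(j-1))$, and summing over $j\le k$ yields a total collision probability of at most $k^2/(2\ell)\le 1/2$ thanks to $\ell\ge k^2$.

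The construction itself is then a straightforward greedy cover. Maintain the set $U$ of still-unsplit $k$-subsets of $[N]$, initially of size at most $\binom{N}{k}\le N^k$. In each round, by averaging there exists a strongly uniform $f$ that splits at least half of the sets in $U$; add it to the family and remove the newly split sets from $U$. After $t$ rounds $|U|\le N^k\cdot 2^{-t}$, so $t=O(k\log N)=O(k\log\log\log n)$ rounds suffice, matching the claimed bound on $|\F|$.

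What remains is the running-time check, where the smallness of $N$ does all the work: the number of strongly uniform functions $[N]\to[\ell]$ is bounded by $N^N=2^{O((\log\log n)^6\log\log\log n)}$, so exhaustively scoring them once per round and iterating $O(k\log\log\log n)$ times takes $n^{o(1)}$ steps and comfortably fits inside linear time in $n$. The only step that required any real thought is the probability bound, and there the slack $\ell\ge k^2$ gives it to us for free.
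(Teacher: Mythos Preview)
Your proof is correct and follows essentially the same approach as the paper: enumerate all strongly uniform functions $[N]\to[\ell]$ (feasible because $N^N=n^{o(1)}$), use a simple probability estimate to show a random such function splits any fixed $k$-set with constant probability, and greedily cover all $\binom{N}{k}$ subsets in $O(k\log N)=O(k\log\log\log n)$ rounds. The paper uses the bound $\Pr[|f(S)|=k]\ge((\ell-k)/\ell)^k\ge 1/4$ whereas you use a pairwise-collision union bound giving probability $\ge 1/2$, and you additionally handle the degenerate case $\ell>N$ explicitly, but these are minor variations of the same argument.
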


\begin{proof}
Let $t=\lceil(\log\log n)^6\rceil$.
Let $\F$ be the set of all functions $[t]\to[\ell]$ that are
balanced, i.e., if $f\in\F$ then $|f^{-1}(i)|\in
\{\lfloor t/\ell\rfloor, \lceil t/\ell\rceil\}$.
Let us first generously estimate, how big $\F$ is.  Clearly, $|\F|\leq
\ell^{t}\le t^t \leq(\log\log n)^{O((\log\log n)^6)}=O(n^{1/4})$.

If $f\in\F$ is chosen randomly then $\Pr[|f(S)|=k]\geq
\bigl({\ell-k\over \ell}\bigr)^k\geq \frac14$.  If $\S$ is an arbitrary family of
size-$k$ subsets of~$[t]$ then an expected number of a quarter
of its sets will be mapped injectively by a random $f\in\F$.
By trying all functions in $\F$ we can find such a function
deterministically in time $O(|\F|\cdot|\S|)=O(n^{1/2})$.

When we find such a function we reduce the number of sets that
are not yet mapped injectively by a factor of~$3/4$.  Hence,
we have to find only $O(k\log t)=O(k\log\log\log n)$ such functions.

This splitter will be strongly uniform by construction.
\end{proof}

\begin{theorem}\label{thm:k3splitter}
Let $k,\ell,n\in\N$ with $\ell\ge k^3$.  We can construct 
an $\lceil n/(\ell(2^{\ell/k^2}- k^2\log n/2))\rceil$-uniform $(n,k,\ell)$-splitter 
of size $O(k^4\log n)$ if $k\ge \log \log n$, 
an $\lceil n/k^2\rceil$-uniform $(n,k,\ell)$-splitter 
of size $O(k^5\log n)$ if $k\le \log \log n$, 
or for any value of $k$, a uniform $(n,k,\ell)$-splitter of size
$O(k^6\log n)$.
\end{theorem}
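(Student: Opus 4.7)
The plan is to assemble the three bounds by combining \Cref{cor:s2}, \Cref{lem:bruteforce}, and \Cref{lem:smooth}, with a case split based on whether $k$ exceeds $\log\log n$.

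For the first bound ($k \ge \log\log n$), \Cref{cor:s2} directly yields an $(n,k,\ell)$-splitter of size $O(k^4\log n/\log\ell) = O(k^4\log n)$ with the stated uniformity; the extra $\ell$ factor in the denominator of the uniformity expression is obtained by a finer accounting of how the composition $f_2\circ f_1$ of two modulo functions distributes its preimages across the $m_2\approx\ell$ output classes, so that the per-class discrepancy inherits a $1/\ell$ factor beyond the worst-case $\lceil n/m_1\rceil$ bound spelled out in \Cref{cor:s2}.

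For the second bound ($k \le \log\log n$), I would fix the intermediate size $\ell_1 = \lceil(\log\log n)^6\rceil$, which satisfies $\ell_1 \ge k^3$ since $k\le\log\log n$, and compose two families: \Cref{cor:s2} applied with target $\ell_1$ produces an $(n,k,\ell_1)$-splitter of size $O(k^4\log n/\log\log\log n)$; for each of its functions $f_1$ with image of size $m_1\le\ell_1$, \Cref{lem:bruteforce} (applicable since $k\le(\log\log n)^2$ and $\ell\ge k^2$) gives a strongly uniform $(m_1,k,\ell)$-splitter of size $O(k\log\log\log n)$. The composition has total size $O(k^5\log n)$, and the preimage deviation of $f_2\circ f_1$ is bounded by $\lceil n/m_1\rceil + \lceil m_1/\ell\rceil$, which is at most $\lceil n/k^2\rceil$ in this regime.

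For the third bound (fully uniform, any $k$), I would invoke \Cref{lem:smooth}. When $k\ge\log\log n$, the uniform version of \Cref{cor:s2} already delivers a splitter of size $O(k^5\log n/\log\ell)\subseteq O(k^6\log n)$. When $k\le\log\log n$, the crude bound $a=\lceil n/k^2\rceil$ from the second case fails the smoothing precondition $n\ge a\ell(k+1)$ for large $\ell$, so I would re-tune $\ell_1$ as a function of $\ell$: choosing $\ell_1\in[\ell(k+1),\,n/(k+1)]$ (feasible exactly when $\ell\le n/(k+1)^2$) makes the composition's deviation small enough for smoothing, which multiplies the family size by $k+1$ to give $O(k^6\log n)$. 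For the residual range $\ell>n/(k+1)^2$ we have $\ell\ge k^2\log n$ comfortably, so \Cref{lem:s1} applies directly and yields a uniform splitter of size $O(k^2\log n/\log\ell)\subseteq O(k^6\log n)$. The main obstacle is this last case: one must tune $\ell_1$ carefully as a function of both $n$ and $\ell$, verify that the two sub-regimes exhaust $\ell\ge k^3$, and track how the deviations of $f_1$ and $f_2$ combine under composition in order to guarantee the smoothing precondition in each.
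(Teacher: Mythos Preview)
Your plan for the first two bounds follows the paper's proof closely. For the fully uniform case with $k\le\log\log n$, however, you take a detour that is both more complicated than necessary and contains a gap, while the paper uses two simple observations you overlook.

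First, the gap: your proposed retuning of $\ell_1$ to a value in $[\ell(k+1),\,n/(k+1)]$ is not compatible with \Cref{lem:bruteforce}. That lemma builds a splitter with domain exactly $\lceil(\log\log n)^6\rceil$, and its exhaustive enumeration over all balanced functions runs in linear time only because that domain is so small. Once $\ell_1$ exceeds $(\log\log n)^6$ the brute-force search blows up, and you do not say how you would build the second-stage splitter in that regime.

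The paper avoids all of this. It keeps $\ell_1=\lceil(\log\log n)^6\rceil$ fixed and carries out a more careful analysis of the preimage deviation of the composition, obtaining $a\le 2n/k^6$ rather than your crude $\lceil n/k^2\rceil$. With this tighter $a$, the smoothing precondition $n\ge a\ell(k+1)$ \emph{is} satisfied for $\ell=k^3$. The paper then smooths to a uniform $(n,k,k^3)$-splitter and invokes the key observation that any uniform $(n,k,k^3)$-splitter is automatically a uniform $(n,k,\ell)$-splitter for every $\ell\ge k^3$: the functions $[n]\to[k^3]$ can be viewed as functions $[n]\to[\ell]$, injectivity on each $k$-set is preserved, and the definition of ``uniform'' refers only to $|\im(f)|$, not to $\ell$. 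This single remark eliminates the entire case split on $\ell$ that you were wrestling with.
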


\begin{proof}
We look at different cases.  Firstly, if $k\geq\log\log n$ then
 \Cref{cor:s2} gives us an
 $\lceil n/(2^{\ell/k^2}- k^2\log n/2)\rceil$-uniform 
 $(n,k,\ell)$-splitter
of size $O(k^4\log n/\log \ell)$,
and a uniform $(n,k,\ell)$-splitter
size $O(k^5\log n/\log \ell)$ also in linear time.

%

Secondly, if $k\leq\log\log n$ then we can use  \Cref{cor:s2}
with $\ell'=\lceil(\log\log n)^6\rceil$ to construct an $\lceil
n/(2^{\ell'/k^2}- k^2\log n/2)\rceil$-uniform  $(n,k,\ell')$-splitter of
size $O(k^4\log n/\log\log\log n)$.  Then we use  \Cref{lem:bruteforce}
to get another strongly uniform $(\ell',k,\ell)$-splitter of size
$O(k\log\log\log n)$.  Combining them we have to see about the uniformity
of the splitter we create.  As we did in the proof of \Cref{cor:s2},
we have to take into account that for every possible image, we create a
different splitter using \Cref{lem:bruteforce}, however, the nonuniformity
in this case is a bit different because we do not combine two modulo
splitters, but two arbitrary ones. Thus, in the worst case the preimage
of two different elements can be as low as
$|f^{-1}(i)|=\lfloor \ell'/\ell\rfloor (n/\ell')$
and as high as
$|f^{-1}(j)|=\lceil \ell'/\ell\rceil (n/\ell'+\lceil
n/(2^{\ell'/k^2}- k^2\log n/2)\rceil)$,
which means that their difference
is at most
$\lceil n/\ell'+(\ell'/\ell+1)n/(2^{\ell'/k^2}- k^2\log n/2)\rceil$,
which, if we consider $\ell'\ge k^6$, $\ell'=\lceil(\log\log
n)^6\rceil$,
$\ell\ge k^3$ and $k\le \log\log n$,
in the appropriate places we can bound the
difference by
\begin{multline*}
\biggl\lceil \frac{n}{\ell'}+\Bigl(\frac{\ell'}{\ell}+1\Bigr)\frac{n}{2^{\ell'/k^2}-\frac12 k^2\log n}\biggr\rceil
\le
\frac{n}{k^6}+ \left(\frac{(\log\log n)^6+1}{k^3}+1\right)\frac{n}{2^{(\log \log n)^4}}+1\\
{}\le{} \frac{n}{k^6}+ (\log\log n)^6\frac{n}{2^{(\log \log n)^4}}
\leq\frac{2n}{(\log\log n)^6}
\leq\frac{2n}{k^6}\text{ for~$n\geq16$ and $k\geq2$.}
\end{multline*}
To see the correctness of the last inequality note that $x^{12}<2^{x^4}$ for $x\geq2$.
We can smooth this one out using \Cref{lem:smooth}, if $n\ge (k+1)\ell 2n/k^6$, which
follows from $\ell\le k^6/(2(k+1))$.  We can just choose $\ell=k^3$ to meet
this condition and note that any $a$-uniform $(n,k,k)^3$-splitter is
also an $a$-uniform $(n,k,\ell)$-splitter for $\ell>k^3$.
In this way we get an uniform $(n,k,\ell)$-splitter of size $O(k^6\log n)$.
\end{proof}

%


\section{Bisectors}\label{sec:bisect}

Let us begin by formally defining the notion of bisector.

\begin{definition}
Let $n$ and $k$ be integers, and $0\leq\alpha\leq1$.  A set $\F$ of
functions $[n]\to\{0,1\}$ is an \emph{$(n,k,\alpha)$-bisector} if
$|f^{-1}(1)|=\lceil\alpha n\rceil$ and for every $S\subseteq[n]$ with $|S|= k$, 
there is some $f\in\F$ such that 
$S\subseteq f^{-1}(0)$.
\end{definition}

Intuitively, in a bisector we choose some binary functions so that for every set $S$ of size $k$, there
is one function that maps all of its elements to $0$.
Just as in the case of splitters, constructing a bisector consists on choosing one such set of functions and listing 
them as a table. This means that the time complexity of building a bisector is $\Omega(n|\F|)$.

The goal is to construct small bisectors quickly (in linear time). One idea that comes to mind is to
start with a function that maps every value to 0 (or to 1) and in consecutive steps
choose subsets of elements of $[n]$ and map those to 1 (or to 0), as is depicted on 
the left-hand (right-hand) diagram of \cref{fig:bisect}, while taking care that the
elements in our $k$-sets end up in the desired category after each step. In the case
of bisectors we will use the first strategy, but a mix of the first and second 
strategies will be necessary to build universal sets as we will see in \Cref{sec:usets}.

\begin{figure}
\centerline{\includegraphics{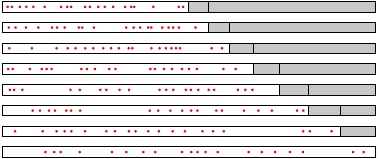}\hfil
\includegraphics{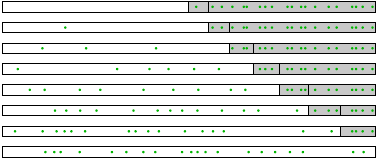}}
\caption{Two strategies to construct a bisector. In the left picture we start with a function that maps all elements to $0$ and on
each step we map some elements to $1$ leaving out a potential $k$-set. In the right picture, 
we start with a function that maps all elements to $1$ and in every step we make sure that a significat part of a given a set
$S$ is mapped to $0$.}
\label{fig:bisect}
\end{figure}

We begin with a simple case, where the number of elements mapped to 1 is small in the bisector.
This corresponds to only one step in \cref{fig:bisect}.

\begin{lemma}\label{lem:bis1}
For every constant $c>0$, there exists a $k'$ such that for every integer $k\ge k'$  
we can construct a $(ck^3,k,1/\sqrt k)$-bisector of size at most 
$6ke^{\sqrt k}\ln c k$ in time
$ k^{O(k^{5/2})}$.
\end{lemma}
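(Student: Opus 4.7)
The plan is to use the probabilistic method followed by a standard greedy derandomization inside the space of balanced functions. Set $n=ck^3$ and $m=\lceil ck^{5/2}\rceil=\lceil n/\sqrt k\rceil$, and consider the family of all $f\colon[n]\to\{0,1\}$ with $|f^{-1}(1)|=m$.

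First I would compute the covering probability. For a fixed $k$-subset $S\subseteq[n]$ and a uniformly random balanced $f$,
\[
p:=\Pr[S\subseteq f^{-1}(0)]=\frac{\binom{n-k}{m}}{\binom{n}{m}}=\prod_{i=0}^{k-1}\frac{n-m-i}{n-i}.
\]
Since $m/n=1/\sqrt k$ up to a lower-order correction and $i<k\ll n$, each factor is at least $1-(1+O(k^{-2}))/\sqrt k$, so for $k$ large
\[
p\ge\Bigl(1-\frac{1+o(1)}{\sqrt k}\Bigr)^{k}\ge e^{-\sqrt k-1-o(1)}.
\]

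Next I would derandomize by greedy covering. Maintain the collection $\mathcal T$ of $k$-subsets not yet contained in the zero-preimage of any already-chosen function; initially $|\mathcal T|=\binom{n}{k}$. In each round, enumerate all balanced functions and pick one whose zero-preimage contains as many sets of $\mathcal T$ as possible. Averaging over the uniform distribution shows that some balanced function covers at least $p|\mathcal T|$ of the remaining sets, so after $t$ rounds $|\mathcal T|\le(1-p)^{t}\binom{n}{k}\le e^{-pt}\binom{n}{k}$. Choosing $t=\lceil p^{-1}\ln\binom{n}{k}\rceil$ forces $|\mathcal T|<1$, hence $|\mathcal T|=0$. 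Using $\ln\binom{n}{k}\le k\ln(en/k)=k(1+\ln c+2\ln k)$ and $p^{-1}\le e^{\sqrt k+1+o(1)}$, this gives
\[
t\le(e+o(1))\cdot k\bigl(1+\ln c+2\ln k\bigr)\cdot e^{\sqrt k}\le 6ke^{\sqrt k}\ln(ck)
\]
for all $k$ beyond a threshold $k'=k'(c)$, since $(1+\ln c+2\ln k)/\ln(ck)\to 2$ as $k\to\infty$ and $2e<6$.

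For the running time, the enumeration dominates: there are $\binom{n}{m}\le(e\sqrt k)^{ck^{5/2}}=k^{O(k^{5/2})}$ balanced functions, and for each one we test at most $\binom{n}{k}=k^{O(k)}$ candidate $k$-sets; with only $O(ke^{\sqrt k}\log k)$ rounds the total is $k^{O(k^{5/2})}$, as claimed. The argument is entirely routine once $p$ is identified; the only mild delicacy is squeezing the leading constant in the size bound down to $6$, which is exactly why the statement allows a threshold $k'$ rather than a bound valid for all $k$.
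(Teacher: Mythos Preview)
Your proof is correct and follows essentially the same approach as the paper: lower-bound the probability that a uniformly random balanced function covers a fixed $k$-set, then greedily pick functions to drive the number of uncovered sets below~$1$, and finally bound the brute-force running time by the number of balanced functions times the number of $k$-sets. The only differences are cosmetic---the paper uses the cruder bound $p\ge\frac12 e^{-\sqrt k}$ and $|\mathcal S|\le (ck^3)^k$ in place of your $p\ge e^{-\sqrt k-1-o(1)}$ and $\ln\binom{n}{k}\le k\ln(en/k)$---and both routes land on the same constant~$6$ after taking $k$ large enough.
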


\begin{proof}
Let $=S\subseteq[ck^3]$, $|S|=k$.  There are at most $k^{ck^3}$ ways to
choose such a set~$S$.  A function $f\colon[ck^3]\to\{0,1\}$ from an
$(ck^3,k,1/\sqrt k)$-bisector has to split $[ck^3]$ into two parts of
sizes $\lceil c k^{5/2}\rceil$ 
and $ck^3-\lceil c k^{5/2} \rceil$.  
There are only ${c k^3\choose \lceil c k^{5/2}\rceil}
\leq (c k^3)^{c k^{5/2}}$
possibilities for such a function and we can construct a
first $(c k^3,k,k^{-1/2})$-bisector $\F$ that consists of \emph{all} such
functions.  Of course, this family is still too big.

If $S$ is a fixed, but arbitrary subset of $[c k^3]$ with size~$k$ then
\[
\Pr[S\subseteq f^{-1}(0)]=\prod_{i=0}^{k-1}\frac{c k^3-\lceil c k^{5/2}\rceil -i}{c k^3}
\geq(1-k^{-1/2}-k^{-2}/c)^k\geq\frac12e^{-\sqrt k}
\]
if $f$ is chosen randomly from~$\F$. For the last step, for every value of $c> 0$,
there exists a $k'$ such that the last step is true 
for every $k\ge k'$. One can easily check that the difference grows 
monotonically for larger values of $k$. Moreover, for $c\ge 1$ it suffices to take $k\ge 16$.

Let now $\S$ be a family of sets~$S\subseteq[ck^3]$ of size $k$.  Then
\[
E\Bigl(|\{\,S\in\S\mid S\subseteq f^{-1}(0)\,\}|\Bigr)\geq \frac12e^{-\sqrt
k}|\S|.
\]
Hence, there is some $f\in\F$ that splits at least $e^{-\sqrt
k}|\S|/2$ sets from $\S$ in the right way.

Our strategy is to choose such an $f$ and add it to new family~$\F'$
which now splits more sets in~$\S$ in the correct way.  Most sets
might remain unsplit, but we can then look for a good $f$ for
\emph{them}.  Repeating this idea over and over leads to two sequences
$\S=\S_1,\S_2,\S_3,\ldots$ and $\emptyset=\F_1,\F_2,\F_3,\ldots$ where
$\S_i$ contains all sets from $\S$ that are not yet split correctly
using functions from~$\F_i$.  We can construct $\F_{i+1}$ from $\F_i$
by finding one $f\in\F$ and setting $\F_{i+1}=\F_i\cup\{f\}$.  This
means in particular that $\F_i$ contains exactly $i-1$ many functions.

Moreover, $|\S_{i+1}|\leq(1-e^{-\sqrt k}/2)|\S_i|$ and we have to
analyze for which $t$ the set $\S_t$ is guaranteed to be empty,
meaning that for every $S\in\S$ there if a function in $\F_t$ that
splits that $S$ correctly.

It is easy to see that $\S_t=\emptyset$ if $(1-e^{-\sqrt k})^t|\S|<1$,
which is true when $t>2e^{\sqrt k}\ln|\S|$.
Initially $\S$ contains all possible set $S$ of size~$k$, which means
$|\S|\leq c k^{3k}$ and therefore $t>6ke^{\sqrt k}\ln (ck)$ is sufficient
to construct $\F'=\F_t$.

How long does it take to compute $\F_{i+1}$
from $\F_i$ and $\S_i$?  We have to test every $f\in\F$ on every
$S\in\S_i$.  One such test can be carried out in $O(k)$ steps,
and then we need to list each~$f$, which takes $O(n)$ time (in this particular case $n=ck^3$).  The
total time is then $O(n\cdot k\cdot|\F|\cdot|\S|)=c k^3 \cdot k^{O(k^{5/2})}=k^{O(k^{5/2})}$.
\end{proof}

The following lemma will allow us to extend bisectors by a few elements if we have a smaller bisector already.
The construction is very similar to the one in Lemma 4 in~\cite{DreierLR2020}.

\begin{lemma}\label{lem:notexact}
If an $(n,k,\alpha)$-bisector $\F$ is given with $d(k+1)<n$ and $d\in\bf N$,
then we can construct in linear time an 
$(n+d,k,\alpha)$-bisector of size $(k+1)|\F|$.
\end{lemma}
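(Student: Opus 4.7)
The plan is to extend each function in $\F$ by $k+1$ different ways, each way ``reserving'' a block of $d$ fresh positions where a candidate $k$-set can hide.

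First I would choose $k+1$ pairwise disjoint windows $W_1,\dots,W_{k+1}\subseteq[n+d]$, each of size~$d$.  This is possible because $(k+1)d < n < n+d$, so the windows fit inside the new universe.  For each $i$, let $\pi_i\colon [n+d]\setminus W_i\to[n]$ be the order‑preserving bijection identifying the complement of $W_i$ with~$[n]$.

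Next, for every $f\in\F$ and every $i\in\{1,\dots,k+1\}$ I would build a new function $g_{f,i}\colon[n+d]\to\{0,1\}$ as follows.  Let $\delta=\lceil\alpha(n+d)\rceil-\lceil\alpha n\rceil$.  On $[n+d]\setminus W_i$ set $g_{f,i}(x)=f(\pi_i(x))$; on $W_i$ assign $1$ to any $\delta$ chosen positions and $0$ to the remaining $d-\delta$ positions.  A routine check shows $0\le\delta\le\lceil\alpha d\rceil\le d$, so such an assignment exists, and then
\[
|g_{f,i}^{-1}(1)| = \lceil\alpha n\rceil + \delta = \lceil\alpha(n+d)\rceil,
\]
as the definition of an $(n+d,k,\alpha)$-bisector requires.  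The family $\F'=\{\,g_{f,i}\mid f\in\F,\ i\in[k+1]\,\}$ clearly has size at most $(k+1)|\F|$ and is computable in linear time, since each $g_{f,i}$ is obtained by writing $f$ into the complement of $W_i$ and padding the window.

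Finally I would verify the bisector property.  Take any $k$-subset $S\subseteq[n+d]$.  Since the $W_i$ are pairwise disjoint and there are $k+1$ of them, by pigeonhole at least one window $W_{i_0}$ satisfies $W_{i_0}\cap S=\emptyset$.  Then $S\subseteq[n+d]\setminus W_{i_0}$, so $\pi_{i_0}(S)$ is a $k$-subset of $[n]$.  Because $\F$ is an $(n,k,\alpha)$-bisector, some $f\in\F$ maps $\pi_{i_0}(S)$ entirely to~$0$, and by construction $g_{f,i_0}$ then maps $S$ entirely to~$0$.  The only slightly delicate point, and the one I expect to be the main thing to be careful about, is the bookkeeping for $\delta$: showing that the $\lceil\alpha\cdot\rceil$ rounding never forces $\delta$ to exceed the window size~$d$, which is exactly where the hypothesis on $\alpha\le 1$ (implicit in the bisector definition) together with the elementary inequality $\lceil\alpha(n+d)\rceil\le\lceil\alpha n\rceil+\lceil\alpha d\rceil$ is needed.
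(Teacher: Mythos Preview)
Your proposal is correct and follows essentially the same approach as the paper: choose $k+1$ pairwise disjoint size-$d$ ``windows,'' apply the given bisector to the size-$n$ complement of each, and fill the window to restore the required number of ones; the pigeonhole argument for correctness is identical. Your write-up is in fact more explicit than the paper's about the balance bookkeeping (the $\delta$ computation and the inequality $\lceil\alpha(n+d)\rceil\le\lceil\alpha n\rceil+\lceil\alpha d\rceil$), which the paper glosses over with ``this can be done arbitrarily in any way that still preserves the desired proportion.''
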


\begin{proof}
In order to do this we define a family of functions
$[n+d]\to\{0,1\}$ that maps all sets $S\subseteq[n+d]$ of size $k$
correctly under the condition that $S$ avoids $d$ given elements
of~$[n+d]$.  We can choose $k+1$ such forbidden sets that are mutually
exclusive and construct the union of the respective families.
We do this as follows, we choose $k+1$ mutually exclusive sets of size 
$d$ in $[n]$, for instance the first set could contain $[d]$ then 
the second set from $d+1$ to $2d$ etc. We can do this because we know that
$d(k+1)<n$. Then we take the bisector $(n,k,\alpha)$ and apply it to $[n+d]$
ignoring one of the forbidden sets every time. We still have to fully extend the functions,
but this can be done arbitrarily in any way that still preserves the desired proportion of $0$ and
$1$ in the functions. The number of functions is now $(k+1)|\F|$, as for each forbidden set we have a 
full bisector that then we extend.

We argue that this is a valid $(n+d,k,\alpha)$-bisector,
as for any set $S$ of size $k$ we can find at least one of the $k+1$ sets of size $d$ containing 
no elements from $S$, the bisector constructed through that forbidden set will contain a function
mapping all elements of $S$ to $0$.
This
gives us a family of the desired size, with the time complexity only
increased by a factor of $k+1$.
\end{proof}

%

\begin{lemma}\label{lem:extendn0}
 Let $n_1$, $n_2$, and $k$ be integers with $k\le n_2\le n_1$, and
 let $0\le \alpha \le 1$. If one can construct
 $(n_2,k,\alpha)$-bisector $\F$
 of size $|\F|$ in time $t$, 
 one can construct an $(n_1,k,\alpha)$-bisector of size $k|\F|$
 in time $O(t + k|\F|n_1)$.
\end{lemma}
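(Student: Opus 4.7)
The plan is to build each new function by composing some $f\in\F$ with a surjection $h\colon[n_1]\to[n_2]$. The bisector part of the argument is essentially automatic: for any $k$-subset $S\subseteq[n_1]$, the image $h(S)\subseteq[n_2]$ has size at most $k$, and since $k\le n_2$ we may pad $h(S)$ to a genuine $k$-subset of $[n_2]$; the bisector property of $\F$ then yields some $f$ with $f(h(S))=\{0\}$, so $f\circ h$ maps $S$ entirely to $0$. Thus a \emph{single} surjection would already suffice for the bisector condition with only $|\F|$ functions, and the factor $k$ in the blow-up is needed solely to enforce the exact uniformity $|(f\circ h)^{-1}(1)|=\lceil\alpha n_1\rceil$.

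Concretely, I would partition $[n_1]$ into $n_2$ blocks of sizes $\lfloor n_1/n_2\rfloor$ or $\lceil n_1/n_2\rceil$ and define $h$ to be the block-index map. For each $f\in\F$, the number of $1$s in $f\circ h$ equals $\lfloor n_1/n_2\rfloor\lceil\alpha n_2\rceil+|R\cap f^{-1}(1)|$, where $R\subseteq[n_2]$ is the set of positions that receive a large block, with $|R|=n_1\bmod n_2$. Varying $R$ (for instance over $k$ different cyclic shifts of a single balanced placement) lets one move this count through a range of different values. For each $f$ I would pick the shift that comes closest to the target $\lceil\alpha n_1\rceil$ and finish by flipping at most a bounded number of $1$-entries of $f\circ h$ down to $0$; flipping a $1$ to a $0$ can never violate the bisector property, since it only enlarges the preimage of $0$.

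Step by step: (i)~invoke the hypothesized construction to obtain $\F$ in time $t$; (ii)~fix $k$ balanced surjections $h_1,\ldots,h_k\colon[n_1]\to[n_2]$ corresponding to $k$ placements of the large blocks; (iii)~for each $f\in\F$ and $i\in[k]$ write down the function $g_{f,i}=f\circ h_i$ together with the downward correction needed to hit the exact count $\lceil\alpha n_1\rceil$; (iv)~verify the bisector property as described above; (v)~verify that the total number of $1$s is exactly $\lceil\alpha n_1\rceil$ for every output function. The family produced has size $k|\F|$, as required.

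The hard part is the uniformity bookkeeping: one has to argue that $k$ placements of the $|R|$ large blocks, combined with downward flips, always suffice to hit $\lceil\alpha n_1\rceil$ for every $f\in\F$ simultaneously, without interfering with the bisector reduction. The bisector step itself is completely painless, and the time analysis is equally easy: building $\F$ costs $t$, and each of the $k|\F|$ output functions is written out as a table of length $n_1$ in $O(n_1)$ time, giving the claimed total $O(t+k|\F|n_1)$.
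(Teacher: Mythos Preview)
Your high-level reduction via a surjection $h\colon[n_1]\to[n_2]$ is correct, and you correctly isolate the only real difficulty as hitting the exact count $\lceil\alpha n_1\rceil$.  But the mechanism you propose for that---use $k$ different block placements and repair each $g_{f,i}$ by flipping only $1$s down to $0$s---does not work as stated.  Take $n_2=10$, $n_1=11$, $\alpha=1/2$: every $f\in\F$ has five $1$s, there is a single large block ($|R|=1$), and $f\circ h_i$ has $5$ or $6$ ones depending on whether that block sits on a $1$ of $f$.  The target is $\lceil 11/2\rceil=6$.  For any fixed shift $h_i$ there will be some $f$ whose $1$-set misses the large block, giving only $5$ ones; you must then flip a $0$ up to a $1$, and nothing in your construction prevents that flipped position from lying in the very $S$ you are trying to cover.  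Since the $1$-sets of the functions in $\F$ are arbitrary, no choice of $k$ shifts makes every pair $(f,i)$ correctable downward.  You flag this as ``the hard part,'' but it is not merely bookkeeping: the claim is false.

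The paper avoids this by decoupling the two sources of error.  Writing $n_1=cn_2+d$ with $0\le d<n_2$, it first composes with the exact modulo map $[cn_2]\to[n_2]$, where every fibre has size~$c$; then the $1$-count is $c\lceil\alpha n_2\rceil\ge\lceil\alpha c n_2\rceil$, so a purely downward correction always suffices and no blow-up is incurred.  The factor~$k$ appears only in the second step, which grafts on the remaining $d$ elements via \Cref{lem:notexact}: one fixes $k+1$ pairwise disjoint ``forbidden'' $d$-subsets of $[cn_2]$, and for each of them applies the $(cn_2,k,\alpha)$-bisector to the $n_1$ elements obtained by swapping that forbidden set out for the $d$ new elements, afterwards filling the forbidden set in arbitrarily to reach the exact count.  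Every $k$-set $S\subseteq[n_1]$ misses at least one of the $k+1$ forbidden sets, so the corresponding copy covers $S$ regardless of whether the fill-in was upward or downward.  That forbidden-set device is exactly what your argument lacks; if you tried to rescue your construction by confining all upward flips of $g_{f,i}$ to a designated region $D_i$ and choosing the $D_i$ disjoint, you would be reinventing it.
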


\begin{proof}
 Let $c$ and $d$ be the positive constants such that $n_1=cn_2+d$ with $d<n_2$.
 Let us consider for $m=cn_2$ the modulo function
 $mod_{n_2}:[m]\to [n_2]$ that takes
 every number and maps it to its modulo with respect to $n_2$.
 For any given set $S$ of size $k$ in $[m]$, one can consider the
 subset $S'$ that one obtains when mapping all of its elements with
 the modulo function. This set might not be of size $k$, but one can
 complete it to a set $S''$ of size $k$ by adding arbitrary elements.
 Then, one can construct an $(m,k,\alpha)$-bisector 
 by taking modulo $n_2$ of each function of an $(n_2,k,\alpha)$-bisector.
 If the value of $n_1$ is multiple
 of $n_2$ one obtains
 an $(n_1,k,\alpha)$-bisector of the same size as the one given 
 in time $c't$ for some constant $c'$. 
 Otherwise, obtain a bisector for $m$ as described and then apply \Cref{lem:notexact}, with $d<n_2$
 to obtain a bisector of size $k|\F|$.
 We use time $t$ to construct the first bisector and everything else can be done
 in linear time.
\end{proof}

\begin{corollary}\label{cor:bisn1}
Let $k\geq16$ be an integer and $n\ge k^4$.  
We can construct an $(n,k,1/\sqrt k)$-bisector 
of size at most $6k^2e^{\sqrt k}\ln k$ in time
$nk^{O(k^{5/2})}$.
\end{corollary}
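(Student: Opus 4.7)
The plan is a two-step composition of results already established in the paper. First, I would apply \Cref{lem:bis1} with the constant $c=1$; since the lemma is stated to apply for every $c>0$ provided $k$ is large enough (with $k\geq 16$ suffices for $c\geq 1$), this is admissible. It produces a $(k^3,k,1/\sqrt k)$-bisector $\F_0$ of size at most $6ke^{\sqrt k}\ln(1\cdot k)=6ke^{\sqrt k}\ln k$ in time $k^{O(k^{5/2})}$. Second, I would apply \Cref{lem:extendn0} with $n_2=k^3$ and $n_1=n$ to lift $\F_0$ from the domain $[k^3]$ up to $[n]$. The precondition $k\le n_2\le n_1$ holds because $k\le k^3$ and, by hypothesis, $k^3\le k^4\le n$. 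The extra factor of $k$ in the assumption $n\ge k^4$ is what the modular-plus-padding construction inside \Cref{lem:extendn0} needs in order to invoke \Cref{lem:notexact} safely on the residual $d<k^3$ left over when $n$ is not a multiple of $k^3$.

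The size bound then follows immediately from \Cref{lem:extendn0}, which blows up the family by a factor of $k$: the resulting $(n,k,1/\sqrt k)$-bisector has size at most $k\cdot 6ke^{\sqrt k}\ln k = 6k^2 e^{\sqrt k}\ln k$, as claimed. For the running time, \Cref{lem:extendn0} contributes $O(t+k|\F_0|n)=O(k^{O(k^{5/2})}+nk^3 e^{\sqrt k}\ln k)$, and since $k^3$ and $e^{\sqrt k}\ln k$ are both dwarfed by $k^{O(k^{5/2})}$, the total simplifies to $nk^{O(k^{5/2})}$.

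There is no real obstacle here; the proof is essentially ``start small via \Cref{lem:bis1}, then extend via \Cref{lem:extendn0}.'' The only delicate point is the choice $c=1$ in \Cref{lem:bis1}: this collapses the $\ln(ck)$ factor in its size bound to $\ln k$, so that after the factor-$k$ blowup from the extension we land exactly at $6k^2 e^{\sqrt k}\ln k$. Any larger choice of $c$ would introduce an unnecessary constant in front of $\ln k$, and a smaller one is not an option since the domain $ck^3$ must be a positive integer large enough to accommodate a $k$-subset.
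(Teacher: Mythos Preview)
Your proposal is correct and follows exactly the paper's approach: apply \Cref{lem:bis1} with $c=1$ and then lift via \Cref{lem:extendn0}. The paper's proof is a single sentence to this effect, and your expanded discussion of the size and time calculations, the role of $c=1$, and why $n\ge k^4$ is needed for the padding step merely fleshes out what the paper leaves implicit.
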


\begin{proof}
 One can apply \Cref{lem:extendn0} to the bisector obtained in
 \Cref{lem:bis1} when $c=1$. 
\end{proof}

\begin{lemma} \label{lem:bisn1}
Let $n\geq k^4$ and $k\ge16$ be an integer, and let $0\le \alpha< 1$ be a constant.  
We can construct an 
$(n,k,\alpha)$-bisector of size $\bigr(\frac{1}{1-\alpha}\bigl)^kk^{O(\sqrt k)}$ in $nk^{O(k^{5/2})}$ time.
\end{lemma}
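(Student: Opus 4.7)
The plan is an iterative boosting of \Cref{cor:bisn1}. Start from $\F_0=\{\mathbf 0\}$, the trivial $(n,k,0)$-bisector. Given an $(n,k,\alpha_i)$-bisector $\F_i$ whose members all share the same $0$-preimage size $n_i$, construct $\F_{i+1}$ as follows: precompute one $(n_i,k,1/\sqrt k)$-bisector $\mathcal G$ via \Cref{cor:bisn1}; for each $f\in\F_i$, relabel $\mathcal G$ by a bijection onto $Z=f^{-1}(0)$ and, for every $g\in\mathcal G$, add to $\F_{i+1}$ the function $f'$ agreeing with $f$ on $f^{-1}(1)$ and with $g$ on $Z$. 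The bisector invariant propagates: if $S\subseteq f^{-1}(0)$ with $|S|=k$, then $\mathcal G$ contains some $g$ mapping $S$ to $0$, and the corresponding $f'$ also maps $S$ entirely to $0$. After step $i$ every function has $c_i=n-n_i$ ones with $n_{i+1}=n_i-\lceil n_i/\sqrt k\rceil$, so $n_i\approx(1-1/\sqrt k)^in$.

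Run the loop for $t=\lceil\sqrt k\ln(1/(1-\alpha))\rceil$ steps, driving $n_t$ down to roughly $(1-\alpha)n$ and $c_t\le\lceil\alpha n\rceil$. A final fine-tuning step pushes $c_t$ up to exactly $\lceil\alpha n\rceil$: for each $f\in\F_t$, construct a $(n_t,k,m/n_t)$-bisector on $f^{-1}(0)$ where $m=\lceil\alpha n\rceil-c_t\le n_t/\sqrt k$, via a mild generalization of \Cref{lem:bis1}---the probabilistic argument goes through for any ratio $\beta\le1/\sqrt k$ because the success probability $\prod_{i=0}^{k-1}(N-\lceil\beta N\rceil-i)/N$ only improves as $\beta$ decreases, remaining at least $\frac12 e^{-\sqrt k}$. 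For every function $g$ produced by this auxiliary bisector, include in the final family the function $f'$ matching $f$ outside $f^{-1}(0)$ and $g$ inside. Correctness is again immediate, and each output function now has exactly $\lceil\alpha n\rceil$ ones.

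For the size, iterating gives $|\F_t|\le(6k^2e^{\sqrt k}\ln k)^t$. Substituting $t\approx\sqrt k\ln(1/(1-\alpha))$, the factor $e^{\sqrt k\cdot t}$ simplifies to $(1/(1-\alpha))^k$, while $(6k^2\ln k)^t=k^{O(\sqrt k)}$ for constant $\alpha$. The fine-tuning step contributes another $k^{O(\sqrt k)}$ factor, preserving the total $(1/(1-\alpha))^kk^{O(\sqrt k)}$. For the running time, each of the $t$ iterations invokes \Cref{cor:bisn1} once (cost $nk^{O(k^{5/2})}$) and then materializes $|\F_{i+1}|$ functions in linear time apiece. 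Summing over iterations and noting that $(1/(1-\alpha))^k=2^{O(k)}$ is absorbed into $k^{O(k^{5/2})}$, the total time is $nk^{O(k^{5/2})}$.

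The main technical obstacle is the final fine-tuning: we need an auxiliary bisector of an arbitrary ratio $\beta=m/n_t\le1/\sqrt k$ rather than exactly $1/\sqrt k$. Fortunately the probabilistic selection inside \Cref{lem:bis1} generalizes verbatim, since the relevant success probability only improves for smaller $\beta$ while the universe size requirement remains satisfied. A subsidiary concern is that every sub-bisector invocation requires a universe of size at least $k^4$; because $n_i\ge(1-\alpha)n\ge(1-\alpha)k^4$ throughout and $\alpha<1$ is a constant, the hypothesis $n\ge k^4$ suffices up to a constant factor absorbed into $\alpha$.
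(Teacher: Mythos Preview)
Your proposal is correct and takes essentially the same approach as the paper: both iterate $(n_i,k,1/\sqrt k)$-bisectors from \Cref{cor:bisn1} on the shrinking zero-preimages for $t=\lceil\sqrt k\ln(1/(1-\alpha))\rceil$ rounds, then adjust the final round to reach exactly $\lceil\alpha n\rceil$ ones. You are somewhat more explicit than the paper about the fine-tuning step (arguing that the probabilistic selection in \Cref{lem:bis1} still succeeds for any ratio $\beta\le1/\sqrt k$) and about the universe-size precondition on the sub-bisector calls, but the core argument is identical.
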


\begin{proof}
First take an $(n,k,1/\sqrt k)$-bisector as given by \Cref{cor:bisn1}.  
The functions in this
bisector map exactly $n'=\lfloor n-n/\sqrt k\rfloor$ numbers from $[n]$ to~$0$.
We can then apply, to the numbers mapped to 0 of each of the functions of the previous 
bisector another function from an $(n',k,1/\sqrt
k)$-bisector.  This function maps $\lfloor n'-n'/\sqrt k\rfloor$ numbers
to~$0$.  We can repeat this until only $(1-\alpha)n$ numbers are mapped
to~$0$.  How many repetitions do we need?  In each iteration the
number of $0$s shrinks by a factor of $1-1/\sqrt k$ and we need
$(1-1/\sqrt k)^t\leq (1-\alpha)$, which means, due to $-\ln (1-x)\ge x$ for $x\le 1$, that
$t=\lceil\sqrt k\ln\frac{1}{1-\alpha}\rceil$ iterations are sufficient for
the desired fraction of 0s. One needs to take into account that on the last step one
might need to adjust the last constant, to do this one can take a smaller fraction of the 
previous number of 0s to form the last bisector.  Each of the bisectors
that have to be combined has size at most $6k^2e^{\sqrt k}$ so their
combination has size at most $(6k^2e^{\sqrt k})^t=\bigr(\frac{1}{1-\alpha}\bigl)^kk^{O(\sqrt k)}$.

The bisector of each iteration takes $nk^{O(k^{5/2})}$ time to build, there are $O(\sqrt{k})$
such iterations, so the total time is still $nk^{O(k^{5/2})}$, and then composing all of them
together takes only linear time.
\end{proof}

\begin{lemma}\label{lem:gen0small}
Let $k$ be an integer, $0\le\alpha<1$ be constant, and $n=k^{O(1)}$, but $n\geq k^5$.  We can construct
an $(n,k,\alpha)$-bisector of size $\bigr(\frac{1}{1-\alpha}\bigl)^kk^{O(k^{5/6})}$ in linear time.
\end{lemma}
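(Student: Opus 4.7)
The plan is to exploit that $n = k^{O(1)}$ to speed up the construction of \Cref{lem:bisn1} via a splitter-based decomposition, applying \Cref{lem:bisn1} only to sub-problems with $k^{1/3}$ ``virtual'' elements each. The key identity is that replacing $k$ by $k^{1/3}$ in the exponent $k^{5/2}$ from \Cref{lem:bis1} yields $(k^{1/3})^{5/2}=k^{5/6}$, matching the target both in the bisector size and in the construction time.

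First I would apply \Cref{thm:k3splitter} to $[n]$ with $\ell=k^3$ to obtain a uniform $(n,k,k^3)$-splitter $\F_s$ of size $O(k^6\log n)=O(k^6\log k)$ in linear time. For each $f\in\F_s$, I merge the $k^3$ splitter parts into $\ell'=k^{2/3}$ big parts of size $b=\lceil n/\ell'\rceil$ by grouping consecutive blocks of $k^{7/3}$ splitter parts. For the ``right'' $f$ that places at most one element of any fixed $k$-subset $S$ per splitter part, each big part then contains at most $\lceil k/\ell'\rceil=k^{1/3}$ elements of $S$.

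Next, for each $f\in\F_s$ and each of its $\ell'$ big parts of size $b$, I would apply \Cref{lem:bisn1} with the parameter $k'=k^{1/3}$ in place of $k$ to construct a $(b,k^{1/3},\alpha)$-bisector $\mathcal{B}_{f,i}$ of size $(1/(1-\alpha))^{k^{1/3}}k^{O(k^{1/6})}$ in time $b\cdot k^{O(k^{5/6})}$; the preconditions $b\ge (k')^4$ and $k'\ge 16$ hold for $k$ large enough because $n\ge k^5$. I then combine the sub-bisectors block-wise: for each $f$ and each tuple $(g_1,\ldots,g_{\ell'})$ with $g_i\in\mathcal{B}_{f,i}$, I take the function equal to $g_i$ on big part~$i$. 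For the good $f$ for any given $S$, each $\mathcal{B}_{f,i}$ contains some $g_i$ mapping $S\cap(\text{big part~}i)$ to $0$, so an appropriate tuple yields a function mapping all of $S$ to $0$.

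The total bisector size is $|\F_s|\cdot[(1/(1-\alpha))^{k^{1/3}}k^{O(k^{1/6})}]^{k^{2/3}}=(1/(1-\alpha))^{k}k^{O(k^{5/6})}$ as claimed, and the total construction time is dominated by the sub-bisector calls: $|\F_s|\cdot\ell'\cdot b\cdot k^{O(k^{5/6})}=k^{O(k^{5/6})}$, linear in the output size. The main technical obstacle is ensuring the combined function maps exactly $\lceil\alpha n\rceil$ elements to~$1$: each sub-bisector contributes $\lceil\alpha b\rceil$ ones per big part, so the combined count can be off by up to $O(\ell')=O(k^{2/3})$, which I would correct by a single-element fix-up in a few of the $\mathcal{B}_{f,i}$, analogous to the adjustments already used in \Cref{lem:notexact} and \Cref{lem:smooth}.
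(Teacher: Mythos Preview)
Your central step does not hold. An $(n,k,k^3)$-splitter only guarantees that the good $f$ is \emph{injective} on $S$, i.e., $|f^{-1}(i)\cap S|\le 1$ for each $i\in[k^3]$. When you coarsen by merging $k^{7/3}$ consecutive color classes into one big part, the resulting big part $f^{-1}(J)$ with $|J|=k^{7/3}$ can contain up to $\min(k,|J|)=k$ elements of~$S$, not $\lceil k/\ell'\rceil=k^{1/3}$. Your computation $k/\ell'=k^{1/3}$ is the \emph{average} load per big part, not a bound on the maximum load; nothing in the splitter property prevents all of $S$ from landing in a single big part. With $k$ elements in one block, applying \Cref{lem:bisn1} with parameter $k'=k^{1/3}$ fails, and applying it with parameter $k$ recovers the $nk^{O(k^{5/2})}$ time you are trying to avoid.

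What you would need is an $(n,k,k^{2/3})$-splitter with $\ell<k$, which genuinely equalizes the parts of~$S$; but the splitters built in this paper are all for $\ell\ge k^3$. The paper's proof sidesteps this by exploiting the hypothesis $n=k^{O(1)}$ in a different way: it \emph{enumerates} all partitions of $[n]$ into roughly $k^{2/3}$ consecutive intervals (plus one ``empty'' interval of size $k^4$ used as padding so each piece satisfies the size precondition of \Cref{lem:bisn1}). For any fixed $S$ there is one such partition putting exactly $\lceil k^{1/3}\rceil$ or $\lfloor k^{1/3}\rfloor$ elements of $S$ in each interval. The number of partitions is at most $n^{O(k^{2/3})}=k^{O(k^{2/3})}$, which is absorbed into the $k^{O(k^{5/6})}$ budget both in size and in construction time. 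Your splitter-based decomposition could be repaired by inserting exactly this enumeration step on top of the injective image $f(S)\subseteq[k^3]$, but at that point the splitter is no longer doing any work and you have essentially reproduced the paper's argument.
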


\begin{proof}
For a set $S\subseteq[n]$ with $|S|=k$ we can guess $\ell=\lceil
k^{2/3}\rceil$ or $\ell=\lfloor k^{2/3}\rfloor$ disjoint
intervals $I_1,\ldots I_\ell$ such that each interval contains
either $\lceil k^{1/3} \rceil$ or $\lfloor k^{1/3} \rfloor$
elements from~$S$.  We can also guess one interval of size $k^4$ that
does not contain any element from~$S$.  We build $\ell$ sets $I'_i$ by
combining $I_i$ with $k^3$ elements from the big interval.  In that way, for all possible guesses,
we have sets $I'_i$ with the following two important properties:  $I'_i$
contains exactly $\lceil k^{1/3}\rceil$ (or $\lfloor k^{1/3}
\rfloor$) elements from~$S$ and the size of $I'_i$ is at least~$k^3$.
We can construct, using \Cref{lem:bisn1}, for each set $I'_i$ with
size $n'_i$, an $(n'_i,\lceil k^{\frac13}\rceil,\alpha)$-bisector
(an $(n'_i,\lfloor k^{1/3}\rfloor,\alpha)$-bisector resp.) of size
$\bigr(\frac{1}{1-\alpha}\bigl)^{\lceil k^{1/3}\rceil}k^{O(k^{1/6})}$
($\bigr(\frac{1}{1-\alpha}\bigl)^{\lfloor
k^{1/3}\rfloor}k^{O(k^{1/6})}$ resp.), in time $nk^{\frac13
O((k^{1/3})^{5/2})}=k^{O(k^{5/6})}$.  Combining all
$k^{2/3}$ families into a big one yields a family of size
$\bigr(\frac{1}{1-\alpha}\bigl)^kk^{O(k^{5/6})}$.
\end{proof}

But this only works if the size of the bisector is polynomial in $k$, now
we need a result similar to \Cref{cor:bisn1} but for general $\alpha$.

\begin{theorem}\label{thm:gen0large}
Let $k\geq16$ be an integer and $n\ge k^4$.  
We can construct an $(n,k,\alpha)$-bisector 
of size at most $\bigr(\frac{1}{1-\alpha}\bigl)^kk^{O(k^{5/6})}$ in 
linear time.
\end{theorem}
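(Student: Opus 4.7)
My plan is to combine the polynomial-universe bisector of \Cref{lem:gen0small} with the universe-extension technique of \Cref{lem:extendn0}. \Cref{lem:gen0small} already achieves the target size $\bigl(\tfrac{1}{1-\alpha}\bigr)^k k^{O(k^{5/6})}$ in linear time whenever the universe is polynomial in $k$, so the remaining work is to lift this construction to arbitrary $n \ge k^4$ without breaking linearity.

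Concretely, I would fix a small threshold such as $n_0 = k^5$. For $n \ge n_0$, first invoke \Cref{lem:gen0small} to build an $(n_0, k, \alpha)$-bisector $\F_0$ of size $|\F_0| = \bigl(\tfrac{1}{1-\alpha}\bigr)^k k^{O(k^{5/6})}$ in time $O(n_0 |\F_0|)$. Then plug $\F_0$ into \Cref{lem:extendn0} with $n_2 = n_0$ and $n_1 = n$, producing an $(n,k,\alpha)$-bisector $\F$ of size $k|\F_0|$. The extra factor $k$ is absorbed into the $k^{O(k^{5/6})}$ factor, so the size bound of the theorem holds. The total time is $O(n_0 |\F_0| + k|\F_0| n) = O(n k |\F_0|)$, which is linear in the size of the output table $n \cdot |\F|$. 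For the leftover range $k^4 \le n < k^5$ the universe is already polynomial in $k$; the argument of \Cref{lem:gen0small} carries over (it needs only enough room for $k^{2/3}$ disjoint intervals of size $\Theta(k^{1/3})$ plus one interval of size roughly $k^4$, which fits for $n \ge k^4$), giving a direct construction of the desired size in linear time.

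The main point to verify is that the running time stays linear through both steps. This comes down to two routine checks: the blow-up factor $k$ from \Cref{lem:extendn0} merges into the $k^{O(k^{5/6})}$ term of the size; and the additive $t$ term in the cost $O(t + k|\F|n_1)$ of \Cref{lem:extendn0} is a function of $k$ alone (since $n_0 = k^{O(1)}$), hence dominated by $k|\F|n$ for $n \ge k^4$. Once these are observed, the output size and construction time both match the claim, completing the proof.
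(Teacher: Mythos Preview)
Your proposal is correct and follows exactly the paper's approach: apply \Cref{lem:extendn0} to the polynomial-universe bisector of \Cref{lem:gen0small}. You are in fact more careful than the paper, which does not explicitly discuss the range $k^4\le n<k^5$ or spell out why the running time stays linear; your treatment of both points is fine (modulo the same mild sloppiness already present in the paper about whether the ``big interval'' of \Cref{lem:gen0small} truly fits when $n$ is close to $k^4$).
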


\begin{proof}
 One can trivially apply \Cref{lem:extendn0} to a bisector obtained in
 \Cref{lem:gen0small}. 
\end{proof}

\section{Uniform Universal Sets}\label{sec:usets}

Universal sets are families of functions from $[n]$ to $\{0,1\}$
that map ever possible $k$-subset of~$[n]$ to every possible combination
of $0$s and $1$s. In particular, for every $k$-subset there must
be a function in the universal set that maps every element to $0$, so a universal set
could also be a bisector if the functions behave appropriately.

In order to do this, we define a special class of universal sets. We want them to be also bisectors,
but not only that, we want to be able to choose the fraction of $0$s and $1$s.
So we define \emph{$(n,k,\alpha)$ universal sets} in a way analogous to the bisectors.

\begin{definition}
Let $n$, $k$ be integers and $0<\alpha<1$.  A set $\F$ of
functions $[n]\to\{0,1\}$ is an \emph{$(n,k,\alpha)$-universal
set} if for every $f\in \F$, 
$|f^{-1}(1)|=\lceil\alpha n\rceil$,
and for every
$S_0,S_1\subseteq[n]$ with
$|S_0|+|S_1|=k$ and $S_0\cap S_1=\emptyset$
there is some $f\in\F$ such that $S_0\subseteq f^{-1}(0)$
and $S_1\subseteq f^{-1}(1)$.
\end{definition}

In order to adapt the results obtained for \Cref{sec:bisect} to work for universal sets we will need
to build some special families of functions that we name \emph{mapping families}.

\begin{definition}
Let $n$, $k_0$, $k_1$ be integers and $0\leq\alpha,\beta\leq1$.  A set $\F$ of
functions $[n]\to\{0,1\}$ is an \emph{$(n,k_0, k_1,\alpha,\beta)$\mapfam} if
$|f^{-1}(1)|=\lceil\alpha n\rceil$ and for every $S_0\cup S_1 = S\subseteq[n]$ with
$S_0\cap S_1=\emptyset$, $|S_0|= k_0$, $|S_1|=  k_1$
there is some $f\in\F$ such that 
$S_0\subseteq f^{-1}(0)$ and $|S_1\cap f^{-1}(1)|=\lceil\beta k_1\rceil$.
\end{definition}

We now generalize the previous results, the goal is to obtain
a result like the one in \Cref{thm:gen0large} but for mapping families.
We start by generalizing \Cref{lem:bis1}.

\begin{lemma}\label{lem:bis2}
Let $k_0, k_1$ be such that $k_0+k_1\le k$ and let $0\le \beta \le 1$.
For every $c> 0$ there exists a $k'$ such that for every $k\ge k'$,   
we can construct a $(ck^3,k_0,k_1,1/\sqrt k,\beta)$\mapfam of size at most 
\[ 
 8{k_1 \choose \lceil\beta k_1\rceil}^{-1}\bigl(\sqrt{k}\bigr)^{\lceil\beta k_1\rceil}
 e^{\frac{k_0+k_1}{\sqrt{k}}}k\ln ck 
\]
 in time
$nk^{O(k^{5/2})}$.
\end{lemma}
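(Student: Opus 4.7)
The plan is to mirror the greedy construction from \Cref{lem:bis1}, now with the extra constraint on the set~$S_1$. I start with the family $\F$ consisting of \emph{all} functions $f\colon[ck^3]\to\{0,1\}$ with $|f^{-1}(1)|=\lceil ck^{5/2}\rceil$; every member of $\F$ automatically meets the uniformity requirement. For any fixed pair $(S_0,S_1)$ of disjoint subsets of $[ck^3]$ with $|S_0|=k_0$ and $|S_1|=k_1$, a uniformly random $f\in\F$ handles it correctly (that is, $S_0\subseteq f^{-1}(0)$ and $|S_1\cap f^{-1}(1)|=\lceil\beta k_1\rceil$) with probability exactly
\[
\binom{k_1}{\lceil\beta k_1\rceil}\cdot\frac{\binom{ck^3-k_0-k_1}{\lceil ck^{5/2}\rceil-\lceil\beta k_1\rceil}}{\binom{ck^3}{\lceil ck^{5/2}\rceil}},
\]
since after selecting which $\lceil\beta k_1\rceil$ elements of~$S_1$ receive a~$1$, the remaining $\lceil ck^{5/2}\rceil-\lceil\beta k_1\rceil$ ones must be placed outside~$S_0\cup S_1$.

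Next I would lower-bound the hypergeometric factor by expanding its falling factorials. Writing $j=\lceil\beta k_1\rceil$ and $s=k_0+k_1-j$, the factor equals $m^{\underline{j}}(n-m)^{\underline{s}}/n^{\underline{j+s}}$ with $m=\lceil ck^{5/2}\rceil$ and $n=ck^3$, and each of its linear factors can be compared term by term with $m/n=1/\sqrt k+O(1/k^3)$ or with $1-m/n$. Using $(1-1/\sqrt k)^s\ge e^{-s/\sqrt k}$ up to an absolute constant (since $\ln(1-1/\sqrt k)=-1/\sqrt k+O(1/k)$ and $s\le k$), and absorbing all remaining corrections (each tending to $1$ as $k\to\infty$ with $c$ fixed) into a single absolute constant, there is a threshold $k'=k'(c)$ so that for every $k\ge k'$ the probability $p$ is at least a fixed constant times
\[
\binom{k_1}{\lceil\beta k_1\rceil}\bigl(1/\sqrt k\bigr)^{\lceil\beta k_1\rceil}e^{-(k_0+k_1)/\sqrt k}.
\]

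With $p$ in hand I would run the greedy iteration exactly as in the proof of \Cref{lem:bis1}. Let $\S$ be the collection of all ordered pairs $(S_0,S_1)$ of disjoint subsets of $[ck^3]$ with $|S_0|=k_0$ and $|S_1|=k_1$, so that $|\S|\le(ck^3)^k$ and hence $\ln|\S|=O(k\ln ck)$. By averaging there is an $f\in\F$ covering a $p$-fraction of the pairs still uncovered, so after $t\ge p^{-1}\ln|\S|$ rounds no pair remains. Collecting all absolute constants into the factor $8$ yields the claimed size bound $8\binom{k_1}{\lceil\beta k_1\rceil}^{-1}(\sqrt k)^{\lceil\beta k_1\rceil}e^{(k_0+k_1)/\sqrt k}\,k\ln ck$. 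The running time analysis is identical to that of \Cref{lem:bis1}: enumerating $\F$ once and testing each of its functions against every pair in $\S$ is the dominant cost, and since $|\F|\le(ck^3)^{ck^{5/2}}=k^{O(k^{5/2})}$ and $|\S|=k^{O(k)}$, the overall time is $nk^{O(k^{5/2})}$.

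The main technical obstacle is the probability estimate: one has to drive all correction factors between the hypergeometric ratio and its independent-sampling analogue $(1/\sqrt k)^{\lceil\beta k_1\rceil}(1-1/\sqrt k)^{k_0+k_1-\lceil\beta k_1\rceil}$ arbitrarily close to~$1$ by taking $k$ sufficiently large in terms of~$c$, which is precisely why the statement needs the threshold~$k'(c)$. The arithmetic is elementary, relying only on standard bounds such as $(1-x)^j\ge e^{-jx-O(jx^2)}$, but landing on the prefactor~$8$ is a minor bookkeeping exercise; no idea beyond those already used in the proof of \Cref{lem:bis1} is needed.
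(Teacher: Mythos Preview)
Your proposal is correct and follows essentially the same approach as the paper: start from the family of all balanced functions, lower-bound the success probability for a fixed pair $(S_0,S_1)$, and run the greedy covering argument from \Cref{lem:bis1}. The only cosmetic difference is that you write the probability in closed hypergeometric form $\binom{k_1}{\lceil\beta k_1\rceil}\binom{ck^3-k_0-k_1}{\lceil ck^{5/2}\rceil-\lceil\beta k_1\rceil}/\binom{ck^3}{\lceil ck^{5/2}\rceil}$, whereas the paper expands it directly as a product of linear factors; the subsequent estimates and the time analysis are identical.
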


\begin{proof}
Let $S_0\cup S_1= S\subseteq[ck^3]$, $S_0\cap S_1=\emptyset$, $|S_0|= k_0$,
$|S_1|= k_1$. One can take the same approach as in the proof of \Cref{lem:bis1}.
The number of possible sets $S_0$ and $S_1$ can be bounded by 
${ck^{3} \choose k_0+k_1} {k_0+k_1 \choose k_1}\leq ck^{3(k_0+k_1)}(k_0+k_1)^{k_1}\le ck^{4k} $,
and the number of candidate functions is ${ck^3\choose \lceil ck^{5/2}\rceil }\leq ck^{3ck^{5/2}}$. 
Let $\F$ be a $(ck^3,k_0,k_1,k^{-1/2},\beta)$\mapfam consisting
of all such functions.

One needs to do a selection of functions of $\F$ in such a way that on every step 
the newly added function covers a large portion of the remaining sets, 
as we did for \Cref{lem:bis1}.
If $S_0$ and $S_1$ are fixed, but arbitrary disjoint subsets of $[ck^3]$ with size 
$k_0$ and $k_1$ respectively, then
\begin{multline*}
\Pr[S_0\subseteq f^{-1}(0)\wedge |S_1\cap f^{-1}(1)|=\lceil\beta k_1\rceil]\\
={k_1 \choose \lceil\beta k_1\rceil}
\left(\prod_{i=0}^{\lceil\beta k_1\rceil-1}\frac{\lceil ck^{5/2}\rceil -i}{ck^3}\right)
\left(\prod_{i=0}^{k_0+k_1-\lceil\beta k_1\rceil-1}\frac{ck^3-\lceil ck^{5/2}\rceil-i}{ck^3}\right) \\
\ge {k_1 \choose \lceil\beta k_1\rceil}(k^{-1/2}-k^{-2}/c)^{\lceil\beta k_1\rceil}
(1-k^{-1/2}-k^{-2}/c)^{ k_0+k_1-\lceil\beta k_1\rceil}\\
={k_1 \choose \lceil\beta k_1\rceil}(\sqrt{k})^{-\lceil\beta k_1\rceil}
(1-k^{-\frac32}/c)^{\lceil\beta k_1\rceil}(1-k^{-1/2}-k^{-2}/c)^{k_0+k_1-\lceil\beta k_1\rceil}  \\
\geq{k_1 \choose \lceil\beta k_1\rceil}( \sqrt{k})^{-\lceil \beta k_1\rceil}
(1-k^{-1/2}-k^{-2}/c)^{k_0+k_1}
\ge\frac12{k_1 \choose \lceil\beta k_1\rceil}( \sqrt{k})^{-\lceil \beta k_1\rceil}e^{-\frac{k_0+k_1}{\sqrt{k}}}\;,
\end{multline*}
if $f$ is chosen randomly from~$\F$, where, in the last step we used the
same inequality as in the proof of \Cref{lem:bis1}, that is, for every
$c>0$ there exists a $k'$ such that for every $k\ge k'$ the inequality
holds.

From here one can, given a family $\S$ of pairs~$S_0$ and $S_1$,
compute the expected number of sets covered by one function
\[
E\Bigl(|\{\,S\in\S\mid S_0\subseteq f^{-1}(0)\wedge|S_1\cap f^{-1}(1)|=\beta k_1\,\}|\Bigr)
\geq \frac12{k_1 \choose \lceil\beta k_1\rceil}( \sqrt{k})^{-\lceil \beta k_1\rceil}e^{-\frac{k_0+k_1}{\sqrt{k}}}|\S|\;.
\]
and select a function that covers
more than that in each step.
So we can define a series of families where $\S_i$ contains the pairs of sets that are not covered by the 
family of selected functions $\F_i$. One then needs to compute the $t$ such that $\S_t=\emptyset$, and
considering that $|\S|\leq ck^{4k}$ and therefore
\[
 t>8{k_1 \choose \lceil\beta k_1\rceil}^{-1}( \sqrt{k})^{\lceil \beta k_1\rceil}e^{\frac{k_0+k_1}{\sqrt{k}}}k\ln ck\;.
\]
As in \Cref{lem:bis1}, the time to construct such a mapping family is $O(n\cdot k\cdot|\F|\cdot|\S|)=n k^{O(k^{5/2})}$.
\end{proof}

Now, to generalize the result in  \Cref{lem:bis2} 
or even the result in  \Cref{cor:bisn1} to any value of $n$ we need a result equivalent
to  \Cref{lem:extendn0} but with general values for $k_1$ and $\beta$. We cannot do it
by directly using modulo functions, as elements of $S_0$ and $S_1$ would clash. To avoid this
we could use Lemma~2 from Naor et al.\ \cite{NaorSS95}, which uses good error correcting codes. 
However, this would not give us mapping families as the splitters they construct are not uniform.
This is not a problem with the uniform splitters of the same size that we build in 
\Cref{sec:splitters}. This
will result in a small blowup in the size of
the mapping family and the time of its construction.

\begin{lemma}\label{lem:extendn}
 
 For every $n$ there exists a $k'$
 such that for every integer $k\ge k'$,
 given $k_0$, and $k_1$ be integers with $k\ge k_0+k_1$ and $n\ge k^4$.
 We can construct an $(n,k_0,k_1,1/\sqrt{k},\beta)$\mapfam of size 
   \[{k_1 \choose \lceil\beta k_1\rceil}^{-1}( \sqrt{k})^{\lceil \beta k_1\rceil}
  e^{\frac{k_0+k_1}{\sqrt{k}}}O(k^8\log k\log n)\]
 in time $ nk^{O(k^{5/2})}$
\end{lemma}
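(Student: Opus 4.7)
The plan is to reduce the problem on $[n]$ to the one on $[k^3]$ by using a uniform splitter, exactly as Lemma~\ref{lem:extendn0} used modulo functions for bisectors. The complication that blocks a direct modulo construction is that a modulo function can collide elements of $S_0$ with elements of $S_1$ that share the same residue, whereas the mapping-family condition demands that $f(S_0)\cap f(S_1)=\emptyset$. I will therefore replace the modulo functions with uniform $(n,k,k^3)$-splitters from Theorem~\ref{thm:k3splitter}, which are guaranteed to be injective on every subset of size $\le k$.

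Concretely, first apply Lemma~\ref{lem:bis2} with $c=1$ to construct a $(k^3,k_0,k_1,1/\sqrt k,\beta)$\mapfam $\mathcal M$ of size
\[
8\binom{k_1}{\lceil\beta k_1\rceil}^{-1}(\sqrt k)^{\lceil\beta k_1\rceil}e^{(k_0+k_1)/\sqrt k}\,k\ln k
\]
in time $k^{O(k^{5/2})}$. Then apply Theorem~\ref{thm:k3splitter} to produce a uniform $(n,k,k^3)$-splitter $\mathcal G$ of size $O(k^6\log n)$ in linear time. For each pair $(f,g)\in\mathcal G\times\mathcal M$, form the composition $h_{f,g}=g\circ f$. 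Given any pair $(S_0,S_1)$ with $|S_0|=k_0$, $|S_1|=k_1$, some $f\in\mathcal G$ is injective on $S_0\cup S_1$ (since $k_0+k_1\le k$), so $f(S_0)$ and $f(S_1)$ are disjoint subsets of $[k^3]$ of sizes $k_0$ and $k_1$; the defining property of $\mathcal M$ then yields some $g\in\mathcal M$ with $h_{f,g}(S_0)=\{0\}$ and $|S_1\cap h_{f,g}^{-1}(1)|=\lceil\beta k_1\rceil$. The covering requirement of the mapping family is therefore met.

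The main obstacle is enforcing the global uniformity $|h^{-1}(1)|=\lceil n/\sqrt k\rceil$. Because $\mathcal G$ is only guaranteed to be uniform (parts $f^{-1}(i)$ differ in size by at most one, and $k^3$ need not divide $n$), the composed count $|h_{f,g}^{-1}(1)|=\sum_{i\in g^{-1}(1)}|f^{-1}(i)|$ can deviate from $\lceil n/\sqrt k\rceil$ by up to $O(k^{5/2})$. I will patch each $h_{f,g}$ by flipping a small number of bits located in parts $f^{-1}(i)$ with $i\notin f(S_0\cup S_1)$; since $f(S_0\cup S_1)$ meets at most $k$ of the $k^3$ parts, there are always at least $k^3-k$ safe parts available, and flipping bits inside them never affects the values $h(S_0)$ or $h(S_1)$. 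Because the correct patch depends on $(S_0,S_1)$, I will expand each $h_{f,g}$ into $O(k)$ variants corresponding to canonical patch locations chosen so that every admissible $(S_0,S_1)$ is served by at least one variant. This multiplicative factor of $O(k)$ combines with the splitter size $O(k^6\log n)$ and the $O(k\log k)$ combinatorial overhead of $\mathcal M$ to produce the claimed overall size $\binom{k_1}{\lceil\beta k_1\rceil}^{-1}(\sqrt k)^{\lceil\beta k_1\rceil}e^{(k_0+k_1)/\sqrt k}\cdot O(k^8\log k\log n)$, and the total construction time remains dominated by the $k^{O(k^{5/2})}$ cost of building $\mathcal M$, giving $nk^{O(k^{5/2})}$ overall.
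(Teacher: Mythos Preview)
Your proposal is correct and follows essentially the same route as the paper: compose the uniform $(n,k,k^3)$-splitter from Theorem~\ref{thm:k3splitter} with the mapping family of Lemma~\ref{lem:bis2}, then repair the global $1$-count by expanding each composed function into $O(k)$ patched variants, at least one of which avoids $S_0\cup S_1$. The only nuance you glossed over is that the splitter's functions may have image $[ck^3]$ for varying $c$ rather than exactly $[k^3]$, so a separate invocation of Lemma~\ref{lem:bis2} (with the appropriate $c$) is needed per image size; this is what the paper does and costs nothing extra asymptotically.
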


\begin{proof}
 Let us consider the uniform $(n,k,k^3)$-splitter of size $O(k^6 \ln n)$ constructible in linear time
 given in \Cref{thm:k3splitter}. Each function in this splitter maps $[n]$ to $[ck^3]$ uniformly for some
 $c>0$.
 
 Let us take the $(ck^3,k_0,k_1,1/\sqrt{k},\beta)$\mapfam given by  \Cref{lem:bis2}. We can 
 construct an $(n,k_0,k_1,1/\sqrt{k},\beta)$\mapfam composing the functions of the splitter
 from \Cref{thm:k3splitter} with the functions from the mapping family in  \Cref{lem:bis2}.
 We have to count now, how many elements from $[n]$ are mapped to $0$ and $1$
 respectively. Because the $(n,k,k^3)$-splitter is uniform, a function 
 $f\colon[n]\to[ck^3]$ of this splitter will have an 
 $|f^{-1}(0)|=\lceil n/ck^3\rceil$,
 or $|f^{-1}(0)|=\lfloor n/ck^3\rfloor$ so, when combining this splitter function
 with its corresponding mapping family, we will get a function $g\colon[n]\to\{0,1\}$
 with
 $\lceil n/ck^3\rceil\lceil ck^3/\sqrt{k}\rceil\ge |g^{-1}(1)|\ge 
 \lfloor n/ck^3\rfloor\lceil ck^3/\sqrt{k}\rceil$.
 We want to have $|g^{-1}(1)|=\lceil n/\sqrt{k}\rceil$,
 however, the deviation from that is at most $O(k^3/\sqrt{k})$, because $n\ge k^4$,
 the deviation will be at most $O(n/k)$, we want to flip some of the $0$ to $1$ or
 viceversa to compensate for the deviation. 
 Let $S\subseteq[n]$ be a $k$-subset of $[n]$.
 We know that $|g^{-1}(1)|=\Omega(n/\sqrt{k})$ ,
 and  we need to flip $O(n/k)$ many, we want to do this without hitting $S$.
 We can divide $g^{-1}(1)$ into $k+1$ subsets of size $\Omega(n/k^{3/2})$,
 for every one of these subsets we will construct a new function
 $g'\colon[n]\to\{0,1\}$ that maps
 one of these subsets to $0$, there is always at least one subset which does not 
 contain an element of $S$.
 
 The size will be the product of their sizes times $k$ 
 to compensate for the unevenness
  \[{k_1 \choose \lceil\beta k_1\rceil}^{-1}( \sqrt{k})^{\lceil \beta k_1\rceil}
  e^{\frac{k_0+k_1}{\sqrt{k}}} k\ln k\cdot O(k^6\log n)\cdot k\;.\]
  And the construction time is the time to construct both and compose them, but this last step
  only takes linear time.
\end{proof}

\begin{lemma}\label{lem:bisn2} Let $k_0$ and $k_1$ be integers
let $k=k_0+k_1$, and let $n\ge k^4$.
We can construct an $(n,k_0,k_1,\alpha,1)$\mapfam of size $ \bigl(\frac{1}{1-\alpha}\bigr)^{k_0}\bigl(\frac{1}{\alpha}\bigr)^{k_1}(k\ln n)^{O(\sqrt{k})}$ in time $n k^{O(k^{5/2})}$ . 
\end{lemma}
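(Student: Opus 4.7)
The plan is to mimic \Cref{lem:bisn1} but track $S_0$ and $S_1$ simultaneously. Starting from the constant-zero function, at iteration $i+1$ with state $(n_i,k_1^{(i)})=(|f^{-1}(0)|,|S_1\cap f^{-1}(0)|)$, apply an $(n_i,k_0,k_1^{(i)},1/\sqrt{k},\beta_i)$\mapfam from \Cref{lem:extendn} to the elements currently in $f^{-1}(0)$. The new function still has $S_0\subseteq f^{-1}(0)$, flips $\lceil n_i/\sqrt{k}\rceil$ elements to $1$, and moves $\lceil\beta_ik_1^{(i)}\rceil$ elements of $S_1$ from $f^{-1}(0)$ to $f^{-1}(1)$. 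Choosing $\beta_i=1/\sqrt{k}$ synchronises the decay, so both $n_i$ and $k_1^{(i)}$ shrink by factor $1-1/\sqrt{k}$ per step. After $t=\lceil\sqrt{k}\ln(1/(1-\alpha))\rceil$ iterations we reach $|f^{-1}(1)|=\lceil\alpha n\rceil$; one final iteration with $\beta=1$ sweeps any elements of $S_1$ still in $f^{-1}(0)$ into $f^{-1}(1)$.

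The size bookkeeping rests on the inequality $\binom{k_1^{(i)}}{\lceil k_1^{(i)}/\sqrt{k}\rceil}\ge(\sqrt{k})^{\lceil k_1^{(i)}/\sqrt{k}\rceil}$, which makes the combinatorial factor in the bound of \Cref{lem:extendn} equal to $O(1)$ for the iterations with $\beta_i=1/\sqrt{k}$; thus each such iteration contributes only $e^{(k_0+k_1^{(i)})/\sqrt{k}}\cdot O(k^8\log k\log n)$. Multiplying across the $t$ iterations, the $k_0$-part of the exponential aggregates to $e^{tk_0/\sqrt{k}}=(1/(1-\alpha))^{k_0}$, the $k_1^{(i)}$-parts (using $\sum_i k_1^{(i)}=O(k_1\sqrt{k})$) combined with the cleanup factor $(\sqrt{k})^{k_1^{(t-1)}}$ from the final $\beta=1$ step yield the $(1/\alpha)^{k_1}$ factor, and the polynomial per-iteration terms collect to $(k\ln n)^{O(\sqrt{k})}$. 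By the symmetry $f\leftrightarrow 1-f$, which swaps $(k_0,k_1,\alpha)$ with $(k_1,k_0,1-\alpha)$, we may assume $\alpha\le 1/2$ so that $k_1^{(t-1)}\le k_1(1-\alpha)$ is comfortably absorbed in $(1/\alpha)^{k_1}$. The linear construction time in $n$ follows since each call to \Cref{lem:extendn} runs in $nk^{O(k^{5/2})}$ and only $O(\sqrt{k})$ calls are made plus linear-time composition of the mapping families.

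The hardest part is the joint accounting of the two target factors $(1/(1-\alpha))^{k_0}$ and $(1/\alpha)^{k_1}$ without incurring spurious logarithmic blow-ups. In particular, one must ensure that both termination conditions—exact $1$-count $\lceil\alpha n\rceil$ and full placement $S_1\subseteq f^{-1}(1)$—are met inside the same $O(\sqrt{k})$ iterations, and that the cleanup factor $(\sqrt{k})^{k_1^{(t-1)}}$ from the last $\beta=1$ step fits within $(1/\alpha)^{k_1}$; this is precisely where the symmetry reduction to $\alpha\le 1/2$ is essential. Slightly non-constant schedules of $\beta_i$ may be needed to make the binomial factors telescope as sketched, and the regime where $\alpha$ is very close to $1$ is handled by running the entire construction on the symmetric problem $(k_1,k_0,1-\alpha)$ and complementing the resulting functions.
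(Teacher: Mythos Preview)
Your iterative framework---compose $t=\lceil\sqrt{k}\ln\frac{1}{1-\alpha}\rceil$ mapping families from \Cref{lem:extendn} on the current zero-set---is exactly the paper's approach. The gap is in the choice of the $\beta_i$'s.

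With the constant schedule $\beta_i=1/\sqrt{k}$, each step removes a $1/\sqrt{k}$ fraction of the remaining $S_1$-elements, so $k_1^{(i)}\approx k_1(1-1/\sqrt{k})^{i}$ and after $t$ steps $k_1^{(t)}\approx(1-\alpha)k_1$. Equivalently, $\sum_{i<t}\lceil\beta_i k_1^{(i)}\rceil\approx\alpha k_1$: only an $\alpha$-fraction of $S_1$ has been moved to the $1$-side when the $1$-count reaches $\lceil\alpha n\rceil$. Your cleanup step with $\beta=1$ therefore incurs the factor $(\sqrt{k})^{k_1^{(t-1)}}\approx(\sqrt{k})^{(1-\alpha)k_1}$ from \Cref{lem:extendn}. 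This is \emph{not} absorbed by $(1/\alpha)^{k_1}$: already for $\alpha=1/2$ one would need $k^{k_1/4}\le 2^{k_1}$, which fails for all $k\ge16$. The symmetry reduction to $\alpha\le1/2$ does not help here; the exponent $(1-\alpha)k_1\cdot\tfrac12\ln k$ dominates $k_1\ln(1/\alpha)$ for every fixed $\alpha$ once $k$ is large. So the sketch as written gives a family of size $k^{\Theta(k_1)}$, not the claimed bound.

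What the paper actually does is optimise the schedule. Writing $j_i=\lceil\beta_i k_i\rceil$, the product of the step costs is (up to the $(k\ln n)^{O(\sqrt{k})}$ factor) $\frac{\prod_i j_i!}{k_1!}\,(\sqrt{k})^{k_1}\,e^{\sum_i(k_0+k_i)/\sqrt{k}}$, using the telescoping $\prod_i\binom{k_i}{j_i}=k_1!/\prod_i j_i!$. Minimising in the $j_i$'s (subject to $\sum_i j_i=k_1$) gives $j_i\approx\frac{k_1}{\alpha\sqrt{k}}\,e^{-i/\sqrt{k}}$, i.e.\ $\beta_1\approx 1/(\alpha\sqrt{k})$ rather than $1/\sqrt{k}$, increasing to $\beta_t=1$. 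With this schedule $\sum_i j_i=k_1$ is met \emph{within} the $t$ steps and no separate cleanup is needed; the subsequent (lengthy) calculation then yields exactly $(1/(1-\alpha))^{k_0}(1/\alpha)^{k_1}$. The difference from your schedule is a constant factor $1/\alpha$ in every $\beta_i$, which is not a ``slight'' adjustment: getting this right and carrying the Stirling estimates through is the technical heart of the proof.
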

This is an extension of  \Cref{lem:extendn}. Here, we do the equivalent step 
to what  \Cref{lem:bisn1} does for  \Cref{lem:bis1} but with a general value of $k_1$.
The proof details are very technical.
Now we can use a similar strategy as in  \Cref{lem:gen0small} to obtain 
mapping families for small values of $n$.

\begin{proof}
 Let us consider an $(n,k_0,k_1,1/\sqrt{k},\beta_1)$-mapping family, for some given $\beta_1$.
 We can construct such a mapping family using  \Cref{lem:extendn}. The functions
 in this mapping family map $n_2=n-\lceil n/\sqrt{k}\rceil$ numbers from $[n]$ to 0.
 Moreover, they map $k_2=k_1-\lceil\beta_1 k_1\rceil$ numbers of $S_1$ to 0
 and $\lceil \beta_1 k_1\rceil$ numbers of $S_1$ to 1.
 
 We can then consider an $(n_2, k_0, k_2, 1/\sqrt{k}, \beta_2)$-mapping family,
 with $k_0+k_2\le k$, which we can construct according to  \Cref{lem:bis2} and 
  \Cref{lem:extendn}.
 This mapping family maps $n_3=n_2-\lceil n_2/\sqrt{k}\rceil$ numbers of $n_2$ to 0, and also subdivides $k_1$ further
 into $k_3=k_1-\lceil\beta_1 k_1\rceil -\lceil\beta_2 k_2\rceil $ values that are still mapped to 0 and $\lceil\beta_2 k_2\rceil$
 that get mapped to 1 in the second step.
 
 One can concatenate these two mapping families by applying to each
 function $f$ of the first mapping family a function $g$ 
 of the second one on those values mapped to 0. To obtain an 
 $(n,k_0,k_1,\frac{2}{\sqrt{k}}-\frac1k, \beta_1+\beta_2(1-\beta_1))$-mapping family, whose size is the product of the sizes
 of the first and second mapping families.
 
 We can repeat this, adjusting the values of $n_i$, $\beta_i$ and $k_i$ accordingly until $\lceil\alpha n\rceil$
 numbers, and, in particular, all of $k_1$ is also mapped to 1.
 
 How many iterations do we need to map all of the necessary elements to 1? As we did in  \Cref{lem:bisn1}, to
 adjust the number of 0s, $t=\lceil\sqrt{k}\ln{\frac{1}{1-\alpha}}\rceil$ iterations are sufficient. Thus,
 we have to select $t$ values of $\beta$ and then multiply the sizes
 of all of those mapping families
 to find out the total size of the targeted mapping family, for all of the required elements to be mapped to 1 we just need
 to make sure that $\beta_t=1$.
 
 How do we select appropriate values of $\beta_i$ for $i=1\hdots t$?
 We already have fixed that $\beta_t=1$, as in the end we need to have all
 of $S_1$ mapped to 1. But in order to find the appropriate values of $\beta$ we have to take a look at
 the product of the mapping family sizes. If we take on each step a
 mapping family using  \Cref{lem:extendn}, we
 have a total size, after $\lceil\sqrt{k}\ln{\frac{1}{1-\alpha}} \rceil$ steps of 
 \[
 \prod_{i=1}^{\lceil\sqrt{k}\ln{\frac{1}{1-\alpha}} \rceil} 8{k_i\choose \lceil \beta_ik_i \rceil}^{-1}(\sqrt{k})^{\lceil \beta_i k_i \rceil}e^{\frac{k_0+k_i}{\sqrt{k}}}k\ln k\ln n\;.
 \]
 If we expand the binomial terms into their corresponding factorials, and take out of the product all of the terms that don't depend on $i$ we obtain
 \begin{align*}
 \frac{(8k\ln k \ln n)^{\lceil\sqrt{k}\ln{\frac{1}{1-\alpha}} \rceil-1}}{k_1!}\prod_{i=1}^{\lceil\sqrt{k}\ln{\frac{1}{1-\alpha}} \rceil}(\lceil \beta_i k_i \rceil)!(\sqrt{k})^{\lceil \beta_i k_i \rceil}e^{\frac{k_0+k_i}{\sqrt{k}}}\;.
 \end{align*}
 A way to find appropriate values for $\beta_i$ is to try to minimize the product, but for that we have too many variables. What we do instead
 is to consider only two factors of it, the one for $k_{i-1}$ and the one for $k_i$. 
 Given a fixed value $k_{i-1}$ and $k_{i+1}$, one can determine the value $k_{i}$ that minimizes the product.
 For that, we rewrite $\lceil\beta_{i-1} k_{i-1}\rceil= k_{i-1} - k_{i}$, and  $\lceil \beta_i k_i \rceil= k_i - k_{i+1}$, and we obtain
  \[
 (k_{i-1}-k_{i})!(\sqrt{k})^{k_{i-1}-k_{i}} e^{\frac{k_0+k_{i-1}}{\sqrt{k}}}
 (k_{i}-k_{i+1})!(\sqrt{k})^{k_{i}-k_{i+1}} e^{\frac{k_0+k_i}{\sqrt{k}}}\;,
 \]
 which can be regrouped as
 \[
 (k_{i-1}-k_{i})!(k_{i}-k_{i+1})!
 (\sqrt{k})^{k_{i-1}-k_{i+1}} e^{\frac{2k_0+k_i+k_{i-1}}{\sqrt{k}}}\;,
 \]
 to find the $k_i$ that minimizes this function, we can ignore all terms that do not depend on $k_i$,
 we can also take the logarithm of it and take 
 Stirling's factorial approximation, i.e., 
 for any integer $m$, $m\ln m -m \le \ln m! \le (m+1)\ln m - m+1\le (m+1)\ln (m+1) - m$.
 Then the logarithm of our product becomes,
  \begin{align*}
  (k_{i-1}-k_{i}+1)\ln(k_{i-1}-k_{i}+1)-(k_{i-1}-k_{i}) +\\
  (k_{i}-k_{i+1}+1)\ln(k_{i}-k_{i+1}+1)-(k_{i}-k_{i+1}) +
 \Bigl({\frac{k_i}{\sqrt{k}}}\Bigr)\;.
 \end{align*}
 And taking the derivative with respect to $k_i$ and setting it to $0$ should give us the minimal value
 \begin{align*}
 -\ln (k_{i-1}-k_{i}+1)&-1+1
 +\ln (k_{i}-k_{i+1})+1-1+\frac{1}{\sqrt{k}}=0\\
 &\ln\Bigl(\frac{ \beta_{i-1} k_{i-1} +1}{ \beta_i k_i +1}\Bigr)=\frac{1}{\sqrt{k}}\\
 &\beta_{i-1} k_{i-1}=( \beta_i k_i +1)e^{\frac{1}{\sqrt{k}}}-1\;.
 \end{align*}
 For simplicity we can take $\beta_{i-1} k_{i-1} =  \beta_i k_i  e^{\frac{1}{\sqrt{k}}}$,
 and then take the ceiling of that when necessary.
 To obtain the values of all $\lceil \beta_i k_i \rceil$, one can then apply this inequality recursively and get
 $\beta_{t-i} k_{t-i}= \beta_t k_t e^{\frac{i}{\sqrt{k}}}$, and we also know that
 $\sum_i \lceil \beta_i k_i \rceil =k_1$ and $\beta_t=1$. Thus,
 \[
 k_1=\sum_{i=1}^t \lceil \beta_i k_i \rceil  
 \ge \sum_{i=0}^{t-1} k_t e^{\frac{i}{\sqrt{k}}}
 =\frac{e^{\frac{t}{\sqrt{k}}}-1}{e^{\frac{1}{\sqrt{k}}}-1}k_t\;,
 \]
 which means that 
 \[
  \beta_i k_i  = \frac{e^{\frac{1}{\sqrt{k}}}-1}{e^{\frac{t}{\sqrt{k}}}-1}k_1 e^{\frac{t-i}{\sqrt{k}}}
 =\frac{e^{\frac{1}{\sqrt{k}}}-1}{e^{\frac{i}{\sqrt{k}}}(1-e^{\frac{-t}{\sqrt{k}}})}k_1
 =\frac{e^{\frac{1}{\sqrt{k}}}-1}{\alpha e^{\frac{i}{\sqrt{k}}}}k_1\;.
 \]
 We can then take  into consideration that $e^{\frac{1}{k}}-1=\frac{1}{\sqrt{k}}\bigl(1+O \bigl(\frac{1}{\sqrt{k}}\bigr)\bigr)$, 
 so
 \[\lceil \beta_i k_i \rceil = \frac{k_1}{\alpha\sqrt{k}}e^{-\frac{i}{\sqrt{k}}}\Bigl(1+O \Bigl(\frac{1}{\sqrt{k}}\Bigr)\Bigr)\;.\]
 The given value of $\beta_i k_i$ is a lower bound for the ceiling function, an upper bound can be easily achieved by adding 1. The values of $k_i$ can be upper bounded using  \Cref{obs:useful}:
 \begin{align*}
  k_i=& k_{i-1}-\lceil\beta_{i-1}k_{i-1}\rceil=k_1-\sum_{j=1}^{i-1}\lceil\beta_{j}k_j\rceil\\
  \le& k_1 -\sum_{j=1}^{i-1}\frac{e^{\frac{1}{\sqrt{k}}}-1}{\alpha e^{\frac{i}{\sqrt{k}}}}k_1
  = k_1\Bigl(1-\Bigl(\frac{e^{\frac{1}{\sqrt{k}}}-1}{\alpha }\Bigr)\sum_{j=1}^{i-1} e^{\frac{-i}{\sqrt{k}}}\Bigr)\\
  =&k_1\Bigl(1-\frac{1-e^{\frac{-i}{\sqrt{k}}}}{\alpha}\Bigr)\;.
 \end{align*}
 Now that we have candidates for the values of $k_i$ and $\lceil \beta_i k_i \rceil $, we can take a closer look at the product we had before and approximate some of the factors to get
\begin{align*} 
 &\frac{(8k\ln k \ln n)^{\lceil\sqrt{k}\ln{\frac{1}{1-\alpha}} \rceil-1}}{k_1!}\prod_{i=1}^{\lceil\sqrt{k}\ln{\frac{1}{1-\alpha}} \rceil}(\lceil \beta_i k_i \rceil )!
 (\sqrt{k})^{\lceil \beta_i k_i \rceil }e^{\frac{k_0+k_i}{\sqrt{k}}}\\
 \le{} &\frac{(k\ln n)^{O(\sqrt{k})}}{k_1!}\prod_{i=1}^{\lceil\sqrt{k}\ln{\frac{1}{1-\alpha}} \rceil}
 (\lceil \beta_i k_i \rceil )^{\lceil \beta_i k_i \rceil +1}e^{-\lceil \beta_i k_i \rceil  +1}
 (\sqrt{k})^{\lceil \beta_i k_i \rceil }e^{\frac{k_0+k_i}{\sqrt{k}}}\\
 ={}&\frac{(k\ln n)^{O(\sqrt{k})}}{k_1!}\prod_{i=1}^{\lceil\sqrt{k}\ln{\frac{1}{1-\alpha}} \rceil}
 e(\lceil \beta_i k_i \rceil )^{\lceil \beta_i k_i \rceil +1}\Bigl(\frac{\sqrt{k}}{e}\Bigr)^{\lceil \beta_i k_i \rceil }e^{\frac{k_0+k_i}{\sqrt{k}}}\\
 ={}&\frac{(k\ln n)^{O(\sqrt{k})}}{k_1!}\cdot
 \Bigl(\prod_{i=1}^{\lceil\sqrt{k}\ln{\frac{1}{1-\alpha}}
 \rceil}(\lceil \beta_i k_i \rceil )^{\lceil \beta_i k_i \rceil
 +1}\Bigr)\cdot\\
 &\hskip 10em\cdot\Bigl(\frac{\sqrt{k}}{e}\Bigr)^{\sum_{i=1}^{\lceil\sqrt{k}\ln{\frac{1}{1-\alpha}} \rceil}\lceil \beta_i k_i \rceil }
 \cdot e^{k_0\ln\frac{1}{1-\alpha}+\frac{\sum_{i=1}^{\lceil\sqrt{k}\ln{\frac{1}{1-\alpha}} \rceil}k_i}{\sqrt{k}}}\\
 ={}&\frac{(k\ln n)^{O(\sqrt{k})}}{k_1!}\cdot \Bigl(\frac{1}{1-\alpha}\Bigr)^{k_0}\cdot
 \Bigl(\prod_{i=1}^{\lceil\sqrt{k}\ln{\frac{1}{1-\alpha}}
 \rceil}(\lceil \beta_i k_i \rceil )^{\lceil \beta_i k_i \rceil
 +1}\Bigr)\cdot\\
 &\hskip 10em\cdot\Bigl(\frac{\sqrt{k}}{e}\Bigr)^{\sum_{i=1}^{\lceil\sqrt{k}\ln{\frac{1}{1-\alpha}} \rceil}\lceil \beta_i k_i \rceil }
 \cdot e^{\frac{\sum_{i=1}^{\lceil\sqrt{k}\ln{\frac{1}{1-\alpha}} \rceil}k_i}{\sqrt{k}}}.
\end{align*}
And we know the sum of all $\lceil \beta_i k_i \rceil $ is $k_1$ by construction, the sum of all $k_i$ is upper bounded as follows
\begin{align*}
 \sum_{i=1}^{\lceil\sqrt{k}\ln{\frac{1}{1-\alpha}} \rceil}k_i&
 \le\sum_{i=1}^{\lceil\sqrt{k}\ln{\frac{1}{1-\alpha}} \rceil}k_1\Bigl(1-\frac{1-e^{\frac{-i}{\sqrt{k}}}}{\alpha}\Bigr)\\
 &=k_1\frac{\alpha-1}{\alpha}\lceil\sqrt{k}\ln{\frac{1}{1-\alpha}} \rceil 
 + \frac{k_1}{\alpha} \sum_{i=1}^{\lceil\sqrt{k}\ln{\frac{1}{1-\alpha}} \rceil}e^{\frac{-i}{\sqrt{k}}}\\
 & \le -k_1\frac{1-\alpha}{\alpha}\sqrt{k}\ln{\frac{1}{1-\alpha}} 
 + \frac{k_1}{\alpha}+k_1\sqrt{k}\;,
 \end{align*}
 where we used  \Cref{obs:useful}. The product of all $(\lceil \beta_i k_i \rceil )^{\lceil \beta_i k_i \rceil +1}$ can also be upper bounded using  \Cref{obs:useful} and the arithmetic sum formula, that is,
 \begin{align*}
  &\prod_{i=1}^{\lceil\sqrt{k}\ln{\frac{1}{1-\alpha}} \rceil}(\lceil \beta_i k_i \rceil )^{\lceil \beta_i k_i \rceil +1}
  =\prod_{i=1}^{\lceil\sqrt{k}\ln{\frac{1}{1-\alpha}} \rceil}
  \Bigl(\frac{\bigl(1+O \bigl(\frac{1}{\sqrt{k}}\bigr)\bigr)}{\alpha\sqrt{k}}e^{-\frac{i}{\sqrt{k}}}k_1\Bigr)
  ^{\lceil \beta_i k_i \rceil +1}\\
  &{}\le{}\Bigl(\frac{\bigl(1+O \bigl(\frac{1}{\sqrt{k}}\bigr)\bigr)}{\alpha\sqrt{k}}k_1\Bigr)
  ^{\sum_i \lceil \beta_i k_i \rceil  +1}
  \cdot
  e^{\sum_{i=1}^{\lceil\sqrt{k}\ln{\frac{1}{1-\alpha}} \rceil}\frac{-i}{\sqrt{k}}\cdot 
  \Bigl(\frac{k_1(1+O(1/\sqrt{k}))e^{\frac{-i}{\sqrt{k}}}}{\alpha \sqrt{k} }+1\Bigr)}\\
  &{}={}\Bigl(\frac{\bigl(1+O \bigl(\frac{1}{\sqrt{k}}\bigr)\bigr)}{\alpha\sqrt{k}}k_1\Bigr)
  ^{k_1+\lceil\sqrt{k}\ln{\frac{1}{1-\alpha}} \rceil-1}
  \cdot
  e^{-\frac{(1+O(1/\sqrt{k}))k_1}{\alpha k}\sum_i ie^{\frac{-i}{\sqrt{k}}}}
  \cdot e^{ \frac{-\sum_i i}{\sqrt{k}}}\\
  &{}\le{}\Bigl(\frac{\bigl(1+O \bigl(\frac{1}{\sqrt{k}}\bigr)\bigr)}{\alpha\sqrt{k}}k_1\Bigr)
  ^{k_1+\lceil\sqrt{k}\ln{\frac{1}{1-\alpha}} \rceil-1}
  \\[-5pt]
  &\hskip8em\cdot
  e^{-\frac{(1+O(1/\sqrt{k}))k_1}{\sqrt{k}\alpha}
  \left( -(1-\alpha)k\ln\frac{1}{1-\alpha}+\alpha (k+O(1))\right)}
  \cdot  
  e^{\frac{-(\lceil\sqrt{k}\ln{\frac{1}{1-\alpha}} \rceil+1)(\lceil\sqrt{k}\ln{\frac{1}{1-\alpha}} \rceil)}{2\sqrt{k}}}\\
  &{}\le{}\Bigl(\frac{\bigl(1+O \bigl(\frac{1}{\sqrt{k}}\bigr)\bigr)}{\alpha\sqrt{k}}k_1\Bigr)
  ^{k_1+\lceil\sqrt{k}\ln{\frac{1}{1-\alpha}} \rceil-1}
  \cdot
  \Bigl(\frac{1}{1-\alpha}\Bigr)^{\frac{(1-\alpha)k_1}{\alpha}(1+O(1/\sqrt{k}))}
  \\
  &\hskip8em\cdot e^{-k_1\sqrt{k}(1+O(\frac{1}{\sqrt{k}}))}
  \cdot \Bigl(\frac{1}{1-\alpha}\Bigr)^{-\sqrt{k}-1/2}\;.
 \end{align*}
 If we substitute, and using again the Stirling inequalities, 
 the total number of functions in the mapping family is at most
   \begin{align*} 
 &\frac{(k \ln n)^{O(\sqrt{k})}}{k_1!}\cdot \Bigl(\frac{1}{1-\alpha}\Bigr)^{k_0}\cdot
 \Bigl(\frac{\sqrt{k}}{e}\Bigr)^{k_1}
 \cdot e^{-k_1\frac{1-\alpha}{\alpha}\ln{\frac{1}{1-\alpha}}
 + \frac{k_1}{\alpha\sqrt{k}}+k_1}\\
 &\cdot \Bigl(\frac{(1+O \bigl(\frac{1}{\sqrt{k}}\bigr)}{\alpha\sqrt{k}}k_1\Bigr)
  ^{k_1+\lceil\sqrt{k}\ln{\frac{1}{1-\alpha}} \rceil-1}
 \cdot \Bigl(\frac{1}{1-\alpha}\Bigr)^{-\sqrt{k}-1/2+\frac{1-\alpha}{\alpha\sqrt{k}}k_1}
 \cdot e^{-k_1\sqrt{k}(1+O(\frac{1}{\sqrt{k}}))}\\
 = &
 (k \ln n)^{O(\sqrt{k})}\cdot \Bigl(\frac{1}{1-\alpha}\Bigr)^{k_0}\cdot k_1^{-k_1+1}
 \cdot \Bigl(\frac{\sqrt{k}}{e}\Bigr)^{k_1}
 \cdot \Bigl(\frac{1+O \bigl(\frac{1}{\sqrt{k}}\bigr)}{\alpha\sqrt{k}}k_1\Bigr)^{k_1-1}\\
  = &
 (k \ln n)^{O(\sqrt{k})}\cdot \Bigl(\frac{1}{1-\alpha}\Bigr)^{k_0}
 \cdot \sqrt{k}^{k_1}\cdot e^{-k_1}
 \cdot \Bigl(\frac{1+O \bigl(\frac{1}{\sqrt{k}}\bigr)}{\alpha\sqrt{k}}\Bigr)^{k_1-1}\\
 = &
 (k \ln n)^{O(\sqrt{k})}\cdot \Bigl(\frac{1}{1-\alpha}\Bigr)^{k_0}
 \cdot e^{-k_1}
 \cdot \Bigl(\frac{1+O(\frac{1}{\sqrt{k}})}{\alpha}\Bigr)^{k_1-1}\\
 =& (k\ln n)^{O(\sqrt{k})}\cdot \Bigl( \frac{1}{1-\alpha}\Bigr)^{k_0}\cdot \Bigl(\frac{1}{\alpha}\Bigr)^{k_1}\;,
 \end{align*}
 as we wanted.
 
 In order to construct the mapping family required in each step, we need $nk^{O(k^{5/2})}$ time, there are
 $O(\sqrt{k})$ mapping families, and we can combine them in linear time, achieving the required time bound.
\end{proof}

For the proof of \Cref{lem:bisn2} we need the following results:
\begin{obs}\label{obs:useful}
 Some useful calculations for  \Cref{lem:bisn2}:
 \begin{align*}&\sum_{i=1}^{\lceil\sqrt{k}\ln\frac{1}{1-\alpha}\rceil} e^{\frac{-i}{\sqrt{k}}}\le
 1+\alpha\sqrt{k}\;,\\
 &\sum_{i=1}^{\lceil\sqrt{k}\ln\frac{1}{1-\alpha}\rceil} i e^{\frac{-i}{\sqrt{k}}}\ge
 -(1-\alpha)k\ln\frac{1}{1-\alpha}+\alpha (k+O(1))\;.\end{align*}
\end{obs}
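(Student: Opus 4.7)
My plan is to prove both inequalities by comparing each sum to the integral of its continuous extension. For the first inequality, since $f(x) = e^{-x/\sqrt{k}}$ is monotonically decreasing, $f(i) \le \int_{i-1}^{i} f(x)\,dx$ for every $i$, and hence
\[
\sum_{i=1}^{T} e^{-i/\sqrt{k}} \;\le\; \int_0^T e^{-x/\sqrt{k}}\,dx \;=\; \sqrt{k}\bigl(1 - e^{-T/\sqrt{k}}\bigr),
\]
where $T = \lceil \sqrt{k}\ln\frac{1}{1-\alpha}\rceil$. I would then combine $T \le \sqrt{k}\ln\frac{1}{1-\alpha} + 1$ with $e^{-1/\sqrt{k}} \ge 1 - 1/\sqrt{k}$ to get $e^{-T/\sqrt{k}} \ge (1-\alpha)(1 - 1/\sqrt{k})$, so $1 - e^{-T/\sqrt{k}} \le \alpha + (1-\alpha)/\sqrt{k}$. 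Multiplying by $\sqrt{k}$ and using $1-\alpha \le 1$ yields $\sum \le \alpha\sqrt{k} + (1-\alpha) \le 1 + \alpha\sqrt{k}$, as desired.

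For the second inequality, set $g(x) = xe^{-x/\sqrt{k}}$; a routine derivative check shows $g$ is increasing on $[0,\sqrt{k}]$ and decreasing on $[\sqrt{k},\infty)$, with peak value $\sqrt{k}/e$. I would use $g(i) \ge \int_{i-1}^{i} g(x)\,dx$ for $i \le \lfloor \sqrt{k}\rfloor$ and $g(i) \ge \int_i^{i+1} g(x)\,dx$ for larger $i$, accounting for the at most one interval around the maximum that is omitted; this yields $\sum_{i=1}^{T} g(i) \ge \int_0^T g(x)\,dx - \sqrt{k}/e$. Integration by parts gives $\int_0^T g(x)\,dx = k - (k + \sqrt{k}T)e^{-T/\sqrt{k}}$, and since $(k + \sqrt{k}T)e^{-T/\sqrt{k}}$ is decreasing in $T$ (derivative $-Te^{-T/\sqrt{k}}$), the lower bound $T \ge \sqrt{k}\ln\frac{1}{1-\alpha}$ turns into $(k + \sqrt{k}T)e^{-T/\sqrt{k}} \le (1-\alpha)k(1 + \ln\frac{1}{1-\alpha})$. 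Substituting gives $\int_0^T g(x)\,dx \ge \alpha k - (1-\alpha)k\ln\frac{1}{1-\alpha}$, and the claimed bound follows after absorbing the $\sqrt{k}/e$ discretisation loss.

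The main technical point is the discretisation error in the second bound: because $g$ attains values of order $\sqrt{k}$ near its peak, the monotonicity-based comparison loses $O(\sqrt{k})$ rather than $O(1)$. This is harmless in the context of \Cref{lem:bisn2}, where the observation is applied inside an exponent and the resulting $e^{O(\sqrt{k})}$ factor is subsumed by the $(k\ln n)^{O(\sqrt{k})}$ overhead already present in the target size bound. A more careful Euler--Maclaurin expansion would tighten the explicit correction, but is not needed for the downstream calculation.
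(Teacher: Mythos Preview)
Your approach differs from the paper's: where you use integral comparison, the paper evaluates both sums via their exact closed forms---the geometric-series formula for the first, and the arithmetico--geometric identity $\sum_{i=1}^{n} i r^i = r\bigl(nr^{n+1}-(n+1)r^n+1\bigr)/(1-r)^2$ with $r=e^{-1/\sqrt{k}}$ for the second---and then simplifies using $e^{-1/\sqrt{k}}/(1-e^{-1/\sqrt{k}})^2 = k+O(1)$. For the first inequality your argument is complete, matches the stated bound, and is arguably cleaner than the paper's manipulation.

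For the second inequality, however, your argument yields only
\[
\sum_i i e^{-i/\sqrt{k}} \;\ge\; \alpha k -(1-\alpha)k\ln\tfrac{1}{1-\alpha} - O(\sqrt{k}),
\]
which is strictly weaker than the $\alpha(k+O(1))$ in the statement. You flag this honestly, and your claim that the weaker bound suffices for \Cref{lem:bisn2} is correct: the sum enters there inside $\exp\bigl(-\tfrac{(1+O(1/\sqrt{k}))k_1}{\alpha k}\sum_i i e^{-i/\sqrt{k}}\bigr)$, so an $O(\sqrt{k})$ slack contributes only a factor $e^{O(k_1/(\alpha\sqrt{k}))}=e^{O(\sqrt{k})}$, absorbed by the existing $(k\ln n)^{O(\sqrt{k})}$ overhead. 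So your route is a valid substitute \emph{in context}, but it does not establish the observation exactly as stated; to recover the $O(1)$ correction one really needs the closed-form evaluation (or the Euler--Maclaurin refinement you mention), since the peak of $xe^{-x/\sqrt{k}}$ has height $\Theta(\sqrt{k})$ and the single-interval discretisation loss is genuinely of that order.
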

\begin{proof}
 For the first inequality we use the geometric sum and $e^{-\frac{1}{\sqrt{k}}}\le 1$:
 \begin{align*}
 \sum_{i=1}^{\lceil\sqrt{k}\ln\frac{1}{1-\alpha}\rceil} e^{\frac{-i}{\sqrt{k}}}
 &=e^{-\frac{1}{\sqrt{k}}}\frac{1-e^{-\frac{\lceil\sqrt{k}\ln\frac{1}{1-\alpha}\rceil}{\sqrt{k}}}}{1-e^{-\frac{1}{\sqrt{k}}}}\\
 &\le e^{-\frac{1}{\sqrt{k}}}\frac{1-e^{-\frac{\sqrt{k}\ln\frac{1}{1-\alpha}+1}{\sqrt{k}}}}{1-e^{-\frac{1}{\sqrt{k}}}}\\
 &\le\frac{1-\left(\frac{1}{1-\alpha}\right)^{-1}e^{-\frac{1}{\sqrt{k}}}}{1-e^{-\frac{1}{\sqrt{k}}}}\\
 &=\frac{1-(1-\alpha)e^{-\frac{1}{\sqrt{k}}}}{1-e^{-\frac{1}{\sqrt{k}}}}\\
 &=1+\frac{\alpha\sqrt{k}}{(1+O(1/\sqrt{k}))}\\
 &\le 1+\alpha\sqrt{k}\;.
 \end{align*}
 For the second inequality we also have a closed formula and $e^{-\frac{1}{\sqrt{k}}}/(1-e^{-\frac{1}{\sqrt{k}}})^2=k+O(1)$, which we can simplify:
 \begin{align*}
 \sum_{i=1}^{\lceil\sqrt{k}\ln\frac{1}{1-\alpha}\rceil} ie^{\frac{-i}{\sqrt{k}}}
 &=\frac{e^{-\frac{1}{\sqrt{k}}}\left(\lceil\sqrt{k}\ln\frac{1}{1-\alpha}\rceil 
 e^{-\frac{\lceil\sqrt{k}\ln\frac{1}{1-\alpha}\rceil+1}{\sqrt{k}}}
 -\left(\lceil\sqrt{k}\ln\frac{1}{1-\alpha}\rceil+1\right)e^{-\frac{\lceil\sqrt{k}\ln\frac{1}{1-\alpha}\rceil}{\sqrt{k}}}+1\right)}{\left(1-e^{-\frac{1}{\sqrt{k}}}\right)^2}\\
 &=\frac{e^{-\frac{1}{\sqrt{k}}}\left(
 \lceil\sqrt{k}\ln\frac{1}{1-\alpha}\rceil e^{-\frac{\lceil\sqrt{k} \ln\frac{1}{1-\alpha}\rceil}{\sqrt{k}}}
 (e^{-\frac{1}{\sqrt{k}}}-1)-e^{-\frac{\lceil\sqrt{k}\ln\frac{1}{1-\alpha}\rceil}{\sqrt{k}}}+1\right)}{\left(1-e^{-\frac{1}{\sqrt{k}}}\right)^2}\\
 &=-\frac{e^{-\frac{1}{\sqrt{k}}}\left(
 \lceil\sqrt{k}\ln\frac{1}{1-\alpha}\rceil e^{-\frac{\lceil\sqrt{k} \ln\frac{1}{1-\alpha}\rceil}{\sqrt{k}}}\right)}{1-e^{-\frac{1}{\sqrt{k}}}}+e^{-\frac{1}{\sqrt{k}}}
 \frac{1-e^{-\frac{\lceil\sqrt{k}\ln\frac{1}{1-\alpha}\rceil}{\sqrt{k}}}}{\left(1-e^{-\frac{1}{\sqrt{k}}}\right)^2}\;.\end{align*}
 Which means that 
 \begin{align*}
 \sum_{i=1}^{\lceil\sqrt{k}\ln\frac{1}{1-\alpha}\rceil} ie^{\frac{-i}{\sqrt{k}}}
 &\ge-\frac{e^{-\frac{1}{\sqrt{k}}}\left(
 \sqrt{k}\ln\frac{1}{1-\alpha} e^{-\frac{\sqrt{k} \ln\frac{1}{1-\alpha}}{\sqrt{k}}}\right)}{1-e^{-\frac{1}{\sqrt{k}}}}+e^{-\frac{1}{\sqrt{k}}}
 \frac{1-e^{-\frac{\sqrt{k}\ln\frac{1}{1-\alpha}}{\sqrt{k}}}}{\left(1-e^{-\frac{1}{\sqrt{k}}}\right)^2}\\ 
 &\ge-\frac{\left(\sqrt{k}\ln\frac{1}{1-\alpha}\right)(1-\alpha)}{\frac{1}{\sqrt{k}}\bigl(1+O(1/\sqrt{k})\bigr)}
 +\frac{ e^{-\frac{1}{\sqrt{k}}}(1-(1-\alpha))}{\left(1-e^{-\frac{1}{\sqrt{k}}}\right)^2}\\
 &=-\frac{k\ln\left(\frac{1}{1-\alpha}\right)(1-\alpha)}{\bigl(1+O(1/\sqrt{k})\bigr)}
 +\alpha (k+O(1))  \\
 &\ge -(1-\alpha)k\ln\frac{1}{1-\alpha}+\alpha (k+O(1))\;.
 \end{align*}
 Thus, proving the desired results.
\end{proof}

\begin{lemma}\label{lem:bisfastsmalln}
 Let $0< \alpha\le 1/2$, and let $n$ and $k_0,k_1$ be integers with $n\ge 4k^6$,
 where $k=k_0+k_1$ and $n=k^{O(1)}$.
 We can construct an $(n,k_0,k_1,\alpha,1)$\mapfam of size 
 $\bigl(\frac{1}{\alpha}\bigr)^{k}k^{O(k^{5/6})}$ in linear time.
\end{lemma}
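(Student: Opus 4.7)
The plan is to mimic the strategy of \Cref{lem:gen0small}, replacing \Cref{lem:bisn1} by \Cref{lem:bisn2} in order to handle the split of $S$ into $S_0$ and $S_1$. Given $S=S_0\cup S_1\subseteq[n]$ with $|S_0|=k_0$ and $|S_1|=k_1$, I would enumerate all ``guesses'' consisting of $\ell=\Theta(k^{2/3})$ disjoint intervals $I_1,\ldots,I_\ell\subseteq[n]$ each containing approximately $k^{1/3}$ elements of $S$, one further disjoint padding interval $J$ of size $k^4$ containing no element of $S$, and per-interval splits $(k_{0,i},k_{1,i})$ summing to $(k_0,k_1)$ with $k_{0,i}+k_{1,i}=O(k^{1/3})$. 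The hypothesis $n\geq 4k^6$ gives enough room for all of this guessing.

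For each guess I would extend $I_i$ to $I'_i$ by adding $k^3$ fresh elements from $J$, distributed disjointly across the $i$, so that each $I'_i$ has size at least $k^3\geq (k^{1/3})^4$, which is precisely what \Cref{lem:bisn2} requires with parameter $k^{1/3}$. Applying \Cref{lem:bisn2} on each $I'_i$ with parameters $(k_{0,i},k_{1,i},\alpha,1)$ yields a per-interval mapping family of size $(\frac{1}{1-\alpha})^{k_{0,i}}(\frac{1}{\alpha})^{k_{1,i}}k^{O(k^{1/6})}$, where I have absorbed the $(k^{1/3}\ln n)^{O(k^{1/6})}$ factor into $k^{O(k^{1/6})}$ using $n=k^{O(1)}$. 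A combined function $f\colon[n]\to\{0,1\}$ is built by taking one function from each per-interval family and gluing them together; on the complement of $\bigcup_i I'_i$, which by the choice of $J$ contains no element of $S$, I would freely assign $0$s and $1$s so that the global count of $1$s equals $\lceil\alpha n\rceil$, which is always feasible because the uncovered region has size $n-O(k^{4/3})$, far larger than the $O(\ell)$ ceiling discrepancy.

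Correctness would follow because for the actual $(S_0,S_1)$ at least one enumerated guess matches exactly, and within that guess the per-interval family for each $I'_i$ contains a function sending $S_0\cap I'_i$ to $0$ and $S_1\cap I'_i$ to $1$ (since $\beta=1$ in each sub-call), hence their combination sends $S_0$ to $0$ and $S_1$ to $1$ globally. The size of the final family is the number of guesses, which is $k^{O(k^{2/3})}$ because $n=k^{O(1)}$, times the per-guess product $\prod_i(\frac{1}{1-\alpha})^{k_{0,i}}(\frac{1}{\alpha})^{k_{1,i}}k^{O(k^{1/6})}=(\frac{1}{1-\alpha})^{k_0}(\frac{1}{\alpha})^{k_1}k^{O(k^{5/6})}$. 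The guess count is absorbed into the $k^{O(k^{5/6})}$ factor, and the assumption $\alpha\leq 1/2$ gives $\frac{1}{1-\alpha}\leq\frac{1}{\alpha}$, so the total is bounded by $(\frac{1}{\alpha})^{k}k^{O(k^{5/6})}$. Linear construction time would follow because building each per-interval family takes $k^{O(k^{5/6})}$ time by \Cref{lem:bisn2}, outputting each combined function costs only $O(n)=k^{O(1)}$, and the total work is dominated by the size of the final family.

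The main technical obstacle, as usual in this line of arguments, is preserving the exact uniformity constraint $|f^{-1}(1)|=\lceil\alpha n\rceil$ when the per-interval mapping families are built on intervals of slightly differing sizes, so that the per-interval $1$-counts do not automatically sum to $\lceil\alpha n\rceil$. This is handled cleanly by exploiting the $S$-free uncovered region of $[n]$, which lets me absorb the small ceiling discrepancies without disturbing the behaviour on $S$, analogously to the padding trick used in \Cref{lem:notexact} and \Cref{lem:extendn}.
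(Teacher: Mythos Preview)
Your proposal is correct and follows essentially the same route as the paper: both chop $[n]$ into $\Theta(k^{2/3})$ intervals each meeting $S$ in $O(k^{1/3})$ points, pad each interval from an $S$-free region so that \Cref{lem:bisn2} applies locally with parameter $\Theta(k^{1/3})$, and take the product of the resulting per-interval families. The only cosmetic difference is that the paper guesses interval systems for $S_0$ and $S_1$ separately and then takes their common refinement, whereas you guess one system for $S$ and explicitly enumerate the per-interval splits $(k_{0,i},k_{1,i})$; you are in fact more careful than the paper both about this enumeration and about restoring the global constraint $|f^{-1}(1)|=\lceil\alpha n\rceil$ on the $S$-free remainder (your stated size $n-O(k^{4/3})$ for that remainder is a slip, since $\sum_i|I'_i|\ge\ell\cdot k^3=\Theta(k^{11/3})$, but all you need is that the unused portion of $J$ exceeds the $O(\ell)=O(k^{2/3})$ rounding discrepancy, which it does).
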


\begin{proof}
 Given a set $S_0\subseteq[n]$ with $|S_0|=k_0$ we can guess $\ell_0=\lceil k_0^{2/3}\rceil$ 
 disjoint intervals $I^0_1,\hdots, I^0_{\ell_0}$ such that each interval 
 contains at most 
 $\lceil k_0^{1/3}\rceil $ elements from $S_0$. 
 Given a set $S_1\subseteq[n]$ one can 
 also guess \smash{$\ell_1=\lceil k_1^{2/3}\rceil$} disjoint intervals $I^1_1,\hdots, I^1_{\ell_1}$
 each containing at most $\lceil k_1^{1/3}\rceil$ elements from $S_1$. Observe, that the 
 intervals $I^0_i$ might not be disjoint with the intervals
 \smash{$I^1_j$}, but we can take the 
 intervals $I_1,\hdots, I_{t}$ with $t\le \ell_0+\ell_1$ given by taking
 all of the interval delimiters and making a separate interval every
 time we hit a new delimiter.
 
 We can also guess now an interval of size $2k^5$ that does not contain any element from $S_0$
 or $S_1$. We build now $t$ extra sets $I'_i$ of size $n'_i\ge k^4$
 by combining $I_i$ with $k^4$ elements from the
 large interval. In that way we obtain sets $I'_i$ containing $k^i_0\le\lceil k_0^{1/3}\rceil$
 elements from $S_0$ and $k^i_1\le\lceil k_1^{1/3}\rceil$ elements from $S_1$. 
 We can construct,
 using  \Cref{lem:bisn2}, for each $I'_i$ an $(n'_i,k^i_0,k_1^i,\alpha,1)$\mapfam of size
 $\bigl(\frac{1}{1-\alpha}\bigr)^{k^i_0}\bigl(\frac{1}{\alpha}\bigr)^{k^i_1}k^{O(k^{1/6})}$
 in time $n_i'k^{O(k^{5/6})}=k^{O(k^{5/6})}$. Combining all of the families into one yields
 a family of size $\bigl(\frac{1}{1-\alpha}\bigr)^{k_0}\bigl(\frac{1}{\alpha}\bigr)^{k_1}k^{O(k^{5/6})}$, which is in the worst case
 $\bigl(\frac{1}{\alpha}\bigr)^{k}k^{O(k^{5/6})}$,
 in linear time.
\end{proof}
We can use the same  \Cref{lem:extendn} but for a general $\alpha$ to
extend  \Cref{lem:bisfastsmalln} and get a mapping family
with general $n$ also in linear time.

\begin{lemma}\label{thm:bisfastgen}
 Let $0< \alpha\le 1/2$ be a constant, 
 and let $n$ and $k_0,k_1$ be integers with $n\ge 4k^6$,
 where $k=k_0+k_1$.
 We can construct an $(n,k_0,k_1,\alpha,1)$\mapfam of size 
 $\bigl(\frac{1}{\alpha}\bigr)^{k}k^{O(k^{5/6})}\log n$ in linear time.
\end{lemma}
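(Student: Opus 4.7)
The plan is to follow the blueprint of Lemma \ref{lem:extendn}, replacing the inner mapping family from Lemma \ref{lem:bis2} by the more general one from Lemma \ref{lem:bisfastsmalln}, so that the uniformity parameter of the composed function becomes the prescribed $\alpha$ rather than the specific $1/\sqrt{k}$.

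First I would invoke Theorem \ref{thm:k3splitter} to construct, in linear time, a uniform $(n,k,m)$-splitter $\mathcal{G}$ of size $O(k^6\log n)$ with $m$ chosen as a constant-degree polynomial in $k$ satisfying $m\ge 4k^6$. Every function $g\in\mathcal{G}$ then has image $[m']$ with $m'=\Theta(m)$, $m'\ge 4k^6$ of size $k^{O(1)}$, and preimage sizes in $\{\lfloor n/m'\rfloor,\lceil n/m'\rceil\}$. For each such $g$, Lemma \ref{lem:bisfastsmalln} provides an $(m',k_0,k_1,\alpha,1)$\mapfam $\mathcal{H}_g$ of size $(1/\alpha)^{k}k^{O(k^{5/6})}$, also in linear time. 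The candidate family consists of all compositions $h\circ g$ for $g\in\mathcal{G}$ and $h\in\mathcal{H}_g$.

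For every disjoint pair $(S_0,S_1)$ with $|S_0|=k_0$, $|S_1|=k_1$, the splitter property selects a $g\in\mathcal{G}$ injective on $S_0\cup S_1$, and then the mapping-family property supplies an $h\in\mathcal{H}_g$ with $g(S_0)\subseteq h^{-1}(0)$ and $g(S_1)\subseteq h^{-1}(1)$, so $h\circ g$ satisfies the desired condition on $(S_0,S_1)$. The main obstacle is that $h\circ g$ need not map \emph{exactly} $\lceil\alpha n\rceil$ elements of $[n]$ to $1$: because $g$ is uniform and $h$ sends precisely $\lceil\alpha m'\rceil$ elements of $[m']$ to $1$, the count $|(h\circ g)^{-1}(1)|$ agrees with $\lceil\alpha n\rceil$ only up to an additive error of $O(m')=k^{O(1)}$, which is negligible relative to $n\ge 4k^6$.

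To reach the exact target I would apply the same ``$k+1$ disjoint blocks'' trick used at the end of Lemma \ref{lem:extendn}: carve out of the over-filled side of the partition a surplus region of size $O(k^{O(1)})$, split it into $k+1$ disjoint blocks each of size $\Omega(n/k^{O(1)})$, and form $k+1$ variants of $h\circ g$ in each of which a different block has been flipped so that the count of $1$'s becomes exactly $\lceil\alpha n\rceil$. By pigeonhole at least one block avoids $S_0\cup S_1$, so the corresponding variant still has $g(S_0)\cup g(S_1)$ untouched and preserves the mapping-family property. This multiplies the family by a factor of $k+1$, yielding a final size of $O(k^6\log n)\cdot(1/\alpha)^{k}k^{O(k^{5/6})}\cdot(k+1)=(1/\alpha)^{k}k^{O(k^{5/6})}\log n$ after absorbing polynomial-in-$k$ factors into $k^{O(k^{5/6})}$. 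Each stage---the splitter, the inner mapping families, and the block-flipping adjustment---runs in linear time, so the total construction is linear as well.
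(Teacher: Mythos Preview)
Your approach is essentially identical to the paper's: compose a uniform $(n,k,\Theta(k^6))$-splitter from Theorem~\ref{thm:k3splitter} with the small-$n$ mapping families of Lemma~\ref{lem:bisfastsmalln}, then repair the residual imbalance with the $(k{+}1)$-blocks pigeonhole trick from the end of Lemma~\ref{lem:extendn}; the size and running-time accounting match as well.

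One wording slip to fix: the $k{+}1$ disjoint blocks are carved from the \emph{entire} over-full side (of size $\Theta(\alpha n)$), not from the surplus region, and within the selected block you flip only as many elements as the surplus demands---your sentence as written has a region of size $k^{O(1)}$ being split into blocks of size $\Omega(n/k^{O(1)})$, which is self-contradictory. Also, the additive error in $|(h\circ g)^{-1}(1)|$ is $O(m'+n/m')$, not just $O(m')$: the extra $n/m'$ term comes from rounding $\alpha m'$ to $\lceil\alpha m'\rceil$. This is still comfortably below the block size $\Omega(\alpha n/k)$ whenever $n\ge k^{7}$, say; for $n$ in the narrow range $[4k^6,k^{7}]$ one simply invokes Lemma~\ref{lem:bisfastsmalln} directly, since there $n=k^{O(1)}$. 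The paper's own proof is equally brief on this point.
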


\begin{proof}
 We have to do the equivalent procedure that we did in \Cref{lem:extendn} to a  
 mapping family constructed using  \Cref{lem:bisfastsmalln}.
 We take a uniform $(n,k,5k^6)$-splitter of size $O(k^6\log n)$
 constructed using \Cref{thm:k3splitter} 
 and combine a function $f\colon[n]\to[m]$ where $5k^6\ge m\ge 4k^6$ of this splitter with an $(m,k_0,k_2,\alpha,1)$\mapfam constructed using \Cref{lem:bisfastsmalln}. 
 Let $g$ be one of the combined functions, this function 
 might not be perfectly balanced. Similarly as in the proof of
 \Cref{lem:extendn}, we can see that if $g$ is balanced enough, we can find $k+1$
 different functions $g'$ such that each possible $k$-subset is respected by one of them and $|g'^{-1}(1)|=\lceil\alpha n \rceil$ as we want. We can do this because
 $\alpha\ge 1/\sqrt{k}$, so the deviation will be even smaller than it was in \Cref{lem:extendn}.
 The size is the product of the splitter and mapping family sizes.
\end{proof}

Observe that in particular, if we apply \Cref{thm:bisfastgen} to obtain an $(n,k,0,\alpha,1)$\mapfam this is the same as an $(n,k,\alpha)$-bisector. We could ask ourselves then, why would we need all of the results of \Cref{sec:bisect}, but
with a closer look we see that aside from being able to construct bisectors for 
smaller values of $n$, the size of the bisectors built in \Cref{sec:bisect} is
completely independent of $n$. This is not the case in \Cref{sec:usets}.
We can finally extend the result of \Cref{thm:bisfastgen} to uniform universal sets.

\begin{theorem}\label{lem:busslow} 
Let $0\le \alpha\le 1/2$, and let $n$ and $k$ be integers with $n\ge 4k^6$.
We can construct a uniform $(n,k,\alpha)$-universal set of size $\bigl(\frac{1}{\alpha}\bigr)^{k}k^{O(k^{5/6})}\log n$ in linear time .
\end{theorem}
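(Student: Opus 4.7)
The plan is to assemble the uniform universal set as the union of mapping families covering every possible way to partition a $k$-subset into a $0$-part and a $1$-part. For each pair $(k_0,k_1)$ of nonnegative integers with $k_0+k_1=k$, I would invoke \Cref{thm:bisfastgen} to construct an $(n,k_0,k_1,\alpha,1)$-mapping family $\F_{k_0,k_1}$ of size $\bigl(\frac{1}{\alpha}\bigr)^{k}k^{O(k^{5/6})}\log n$ in linear time, and then set $\F=\bigcup_{k_0+k_1=k}\F_{k_0,k_1}$.

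To verify correctness, take disjoint $S_0,S_1\subseteq[n]$ with $|S_0|+|S_1|=k$, and set $(k_0,k_1)=(|S_0|,|S_1|)$. By the definition of a mapping family with $\beta=1$, there exists $f\in\F_{k_0,k_1}\subseteq\F$ such that $S_0\subseteq f^{-1}(0)$ and $|S_1\cap f^{-1}(1)|=\lceil 1\cdot k_1\rceil=k_1$, forcing $S_1\subseteq f^{-1}(1)$. Uniformity is inherited directly from the mapping families, since every function in every $\F_{k_0,k_1}$ satisfies $|f^{-1}(1)|=\lceil\alpha n\rceil$ by construction.

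For the size and timing, there are $k+1$ choices of $(k_0,k_1)$ and each $\F_{k_0,k_1}$ has size $\bigl(\frac{1}{\alpha}\bigr)^{k}k^{O(k^{5/6})}\log n$, giving a total size of $(k+1)\cdot\bigl(\frac{1}{\alpha}\bigr)^{k}k^{O(k^{5/6})}\log n$; the factor $k+1$ is absorbed into $k^{O(k^{5/6})}$. Because each $\F_{k_0,k_1}$ is constructed in time linear in its own size and we only take a union, the total construction time is linear in the output size.

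There is essentially no substantive obstacle at this stage: the heavy technical work was concentrated in \Cref{lem:bisn2} and \Cref{thm:bisfastgen}. The architectural point that makes the reduction immediate — and that motivated introducing mapping families in the first place — is that an $(n,k_0,k_1,\alpha,1)$-mapping family is exactly the universal-set requirement restricted to a fixed split size $(k_0,k_1)$, so ranging over the $k+1$ possible split sizes assembles the universal set with no extra blowup beyond a factor that is dominated by the $k^{O(k^{5/6})}$ term.
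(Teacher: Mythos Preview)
Your proposal is correct and follows essentially the same approach as the paper: take the union over all $(k_0,k_1)$ with $k_0+k_1=k$ of the $(n,k_0,k_1,\alpha,1)$-mapping families from \Cref{thm:bisfastgen}, observe that the $\beta=1$ condition forces $S_1\subseteq f^{-1}(1)$, and absorb the extra factor $k+1$ into the $k^{O(k^{5/6})}$ term.
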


\begin{proof}
 One can take for every value of $k_0$ and $k_1$ the $(n,k_0,k_1,\alpha,1)$\mapfam given by
 \Cref{thm:bisfastgen}. 
 Their union is a uniform universal set because for every pair of sets $S_0$ and $S_1$, 
 we can take a function from the mapping family with the appropriate values of $k_0$ and $k_1$,
 which will assign all the elements of $S_0$ to 0 and $S_1$ to 1. 
 
 There are $k+1$ choices for the values of $k_0$ and $k_1$, which means that the size 
 of all of these mapping families is only a factor of $k+1$ bigger than the size of one of them.
\end{proof} 

\section{Conclusion}

While our uniform splitters are built in a very simple way, the sizes
of $(n,k,k^3)$-splitters by Naor et al. are smaller than ours and they
obtain also small $(n,k,k^2)$-splitters, which we do not have.
Closing this gap remains an open question.

\bibliographystyle{plainurl}
\bibliography{bisect}
%

\end{document}